\documentclass[pra,aps,amsmath,amssymb,superscriptaddress,twocolumn,longbibliography]{revtex4-1}
\usepackage{graphicx,multirow}
\usepackage{color,outlines}
\usepackage[english]{babel}
\usepackage{amsthm}
\usepackage[sc,osf]{mathpazo}
\usepackage{hyperref}
\usepackage{physics}
\usepackage{outlines}

\usepackage{adjustbox}

\usepackage[a4paper,margin=1in]{geometry}
\usepackage{amsmath,amssymb,amsfonts}
\usepackage{mathtools}
\usepackage{bm}
\usepackage{physics}
\usepackage{graphicx}
\usepackage{hyperref}

\usepackage{tabularx} 
\usepackage{array}
\usepackage{tikz}
\usepackage{adjustbox}
\newcolumntype{C}{>{\centering\arraybackslash}X}

\usepackage[sc,osf]{mathpazo}
\usetikzlibrary {positioning}

\definecolor{green2}{RGB}{0,100,0}

\newtheorem{theorem}{Theorem}

\newtheorem{theoremD}{Theorem}
\newtheorem{definition}[theoremD]{Definition}

\newtheorem{theoremR}{Theorem}
\newtheorem{remark}[theoremR]{Remark}

\newtheorem{theoremP}{Theorem}
\newtheorem{proposition}[theoremP]{Proposition}

\begin{document}

\title{Entangling Power  Dynamics:  Ergodicity and Mixing}

\author{Rahul V}
\email{rahulsomasundar@gmail.com}
\author{Ranjan Modak}
\email{ranjan@iittp.ac.in}
\author{Shaon Sahoo}
\email{shaon@iittp.ac.in}
\author{S. Aravinda }
\email{aravinda@iittp.ac.in}

\affiliation{Department of Physics, Indian Institute of Technology Tirupati, Tirupati, India~517619}  
\begin{abstract}
We study quantum dynamics through the lens of entanglement generation and characterize the underlying unitary evolution by the distinct signatures it imprints on the time-dependent entangling power.
For a unitary operator, we characterize ergodicity by the equality between its long-time-averaged entangling power and the Haar-averaged linear entropy. We define mixing more stringently as the convergence of the time-dependent entangling power itself to the Haar value at long times. Within this framework, we establish the ergodic hierarchy of dynamical behavior, showing in particular that mixing implies ergodicity, whereas ergodicity does not necessarily imply mixing. As an application, we find that two-qubit unitary gates are neither ergodic nor mixing: their long-time-averaged entangling power can take only four discrete values, none of which coincides with the Haar average. We then investigate many-body dynamics using the kicked Ising chain and find that the long-time-averaged entangling power converges to the Haar value in both integrable and nonintegrable cases, indicating ergodicity. Remarkably, however, the nonintegrable chain exhibits mixing, whereas the integrable chain, despite being ergodic, is demonstrably nonmixing. 
We also introduce a Lyapunov-like exponent to characterize the rate at which the time-dependent entangling power approaches its saturation value. We find that this exponent increases systematically with the degree of integrability breaking in the many-body system. 
Our results establish entanglement generation as a useful framework for characterizing dynamical systems and reveal qualitatively different signatures of integrability beyond conventional diagnostics.

\end{abstract}

\maketitle

\section{Introduction}

Quantum entanglement is a key feature of quantum mechanics and is a main resource for quantum information, quantum computation, and quantum communication \cite{ArnoldAvez1968,Gisin2007,Nielsen2010}. Beyond quantum information science, entanglement has emerged as a powerful probe of many-body quantum systems, providing valuable insights into quantum phase transitions \cite{Amico2008,Osterloh2002,Osborne2002}, thermalization \cite{Deutsch1991,Srednicki1994,DAlessio2016}, information scrambling \cite{Hayden2007,Sekino2008,Swingle2018}, and nonequilibrium quantum dynamics \cite{Calabrese2005,Calabrese2007,Eisert2015}. In isolated many-body systems, unitary dynamics can generate entanglement, which can be studied using measures of entanglement \cite{zanardi2000entangling,zanardi2001entanglement}. Consequently, the evolution of entanglement has become a tool for investigating integrability, quantum chaos, thermalization, and ergodicity in quantum many-body systems \cite{Calabrese2005,Calabrese2007,Eisert2015,Pal2018}.

Ergodicity in a quantum system refers to the property that its long-time dynamics explores the accessible Hilbert space in accordance with the predictions of statistical ensembles, leading to the relaxation of local observables toward their thermal equilibrium values \cite{Deutsch1991,Srednicki1994,DAlessio2016}. In classical ergodic theory, ergodicity is one of the fundamental statistical properties of a probability-measure-preserving dynamical system. It forms part of the ergodic hierarchy, with mixing and Bernoulli properties representing progressively stronger forms of chaotic behavior \cite{Cornfeld1982}. Mixing represents a stronger notion of ergodicity, in which temporal correlations decay over time, causing the system to lose memory of its initial conditions, such that states widely separated in time approach statistical values \cite{Zaslavsky1981,Peres1984}.

Dual-unitary circuits are quantum circuits in which the local two-site gate is unitary both in the usual temporal direction and after a spatial reshuffling of its indices, making the gate unitary in both space and time \cite{Bertini2019}. This special structure provides an exactly tractable setting for studying the quantum ergodic hierarchy, including ergodicity, mixing, and the stronger Bernoulli property, and has been further generalized to triunitary circuits and hierarchical extensions of dual unitarity solvable models
\cite{Aravinda2021,Rather2020,Jonay2021,yu2024hierarchical}. Within quantum many-body physics, these ideas are deeply interconnected with the Eigenstate Thermalization Hypothesis, quantum chaos, information scrambling, and operator spreading, which together constitute the contemporary theoretical framework for thermalization in isolated quantum systems \cite{Deutsch1991,Srednicki1994,DAlessio2016,Jisha_2024,MISHRA2025131000}. A study of the ergodic and mixing nature of quantum channels is reported in \cite{PhysRevA.110.042607,Burgarth2013,Movassagh2022,Movassagh2021}.

Entangling power quantifies the ability of a unitary operator to generate entanglement on average from an initial product state \cite{Zanardi2000,Kraus2001,Makhlin2002,Zhang2003}. Iterative applications of a unitary with local dynamics drive its entangling power exponentially toward the corresponding random-unitary value \cite{Jonnadula2017}. For many-body time-evolution operators, the long-time entangling power provides a distinction between different dynamical regimes, with integrable and nonintegrable systems exhibiting different saturation behavior \cite{Pal2018}. Recent studies have generalized entangling power to multipartite and dual-unitary systems, revealing richer entanglement dynamics beyond individual bipartite gates \cite{Manna2024,Qiu2025,Malik2026}. Entangling power has made major contributions to pseudo-random matrix theory, the realization of high-fidelity entanglers, quantum-classical correspondence in quantum channels, SU(2)-symmetric dissipative quantum many-body dynamics, and imperfect entangling power of quantum gates \cite{PRXQuantum.6.020322,npr7-b7kq,PhysRevE.111.014210,d8kg-h1t7,jl2b-bxfn}. These developments raise the question of whether entangling power and its long-time averages can be used to characterize the ergodic, mixing, and chaotic properties of quantum evolution.

We therefore begin with the simple two-qubit system as a convenient setting to study these properties. It is possible to represent all the nonlocal and local two-qubit gates in the Weyl chamber using the Cartan decomposition of the SU(4) group by parameterizing three variables \cite{Zhang2003}. The long-time behavior under successive applications of a two-qubit gate was investigated, where arbitrary iterations of the gate were related to the dynamics of a free particle in a three-dimensional billiard associated with the Weyl chamber \cite{Mandarino2018}. This geometric description provides a useful picture of the trajectories generated by repeated applications of the unitary and forms a natural starting point for studying the long-time and time-averaged behavior of the entangling power.

We then extend our study to many-body systems, focusing on quantum spin chains as a natural setting for investigating the interplay between entangling power, ergodicity, and mixing. Quantum spin chains are great systems for studying these ideas because they show a rich mix of particle interactions, quantum correlations, and out-of-equilibrium behavior. Because of this, physicists use tools like entanglement growth, entanglement entropy, fidelity, operator entanglement, and entangling power to study quantum phase transitions, topological phases, Floquet (periodically-driven) dynamics, and ergodic behavior in low-dimensional systems \cite{Amico2008,Sahoo2013,Sahoo2014,Sahoo2020}.

The structure of the paper follows that first section contains formalism and its relation to the Ergodic Hierarchy and Chaos of the multipartite system are presented in Sec.~\ref{sec_1}. We then apply this framework to Cartan-decomposed two-qubit gates and analyze their behavior in the Weyl chamber in Sec.~\ref{sec_2}. Finally, we extend the analysis to the kicked spin Ising chain and investigate the behavior of the average entangling power in this many-body system in Sec.~\ref{sec_3}.

\section{Ergodic Hierarchy of Entangling Power}\label{sec:formalism}
\label{sec_1}

In this section, we discuss the dynamics of entangling power and introduce measures to characterize its temporal behavior. 
In the following, $U$ represents a unitary evolution operator which acts on the Hilbert space $\mathcal{H}_A\otimes\mathcal{H}_B$ of a bipartite system $AB$. The time-evolution by the unitary operator is represented by the sequence $\{I, U, U^2, \cdots\}$. We start with the definition of the entangling power of the operator $U$.

\begin{definition}[Entangling Power $e_P(U)$ \cite{zanardi2000entangling}]
 Consider $U:\mathcal{H}_A\otimes\mathcal{H}_B \rightarrow \mathcal{H}_A\otimes\mathcal{H}_B$ and $|\psi_A\rangle\otimes|\psi_B\rangle \in \mathcal{H}_A\otimes\mathcal{H}_B$. The entangling power of $U$ is defined as
\begin{equation}
e_p(U)=
\int d\psi_A\, d\psi_B\;
\mathcal{E}\!\left(
U\left(|\psi_A\rangle\otimes|\psi_B\rangle\right)
\right),
\end{equation}
where $d\psi_A$ and $d\psi_B$ denote the Haar measures on the pure-state manifolds of the subsystems, and $\mathcal{E}(\cdot)$ is an appropriate measure of bipartite entanglement.
    \label{df_ep}
\end{definition}

For operational convenience, it is common to choose the linear entropy for $\mathcal{E}(\cdot)$ \cite{Zanardi2000,Zanardi2001,Wang2003,Pal2018}.   
For a pure state $\ket{\psi_{AB}} \in \mathcal{H}_A\otimes\mathcal{H}_B$, the linear bipartite entanglement entropy is define as,
\begin{equation}
    \mathcal{E}(|\Psi_{AB}\rangle)
= 1-\operatorname{Tr}(\rho_A^2), 
\label{df_LE}
\end{equation}
where
$\rho_A=\operatorname{Tr}_B
\left(
|\Psi_{AB}\rangle\langle\Psi_{AB}|
\right)$ is the reduced density matrix of the subsystem $A$. In fact, for a bipartite system with $dim(\mathcal{H}_A)=dim(\mathcal{H}_B)=d$, the entangling power in terms of linear entropy can be given in a closed form \cite{zanardi2001entanglement}:
\begin{equation}
e_p(U) =\frac{d^2}{(d+1)^2} \left[E(U)+E(US)-E(S)\right],
\label{eq_ep_closed}
\end{equation}
where $S$ is the SWAP operator: $S|\psi_A\rangle\otimes|\psi_B\rangle=|\psi_B\rangle\otimes|\psi_A\rangle$. Here $E(U)$ is the operator entanglement \cite{Zanardi2001} which is defined as 
\begin{equation}
E(U)
=
1-\frac{1}{d^{4}}
\operatorname{tr}
\left[
\left(U^{R_1}U^{R_1\dagger}\right)^2
\right]
\end{equation} where
\begin{equation}
\langle \beta \alpha | U^{R_1} | j i \rangle
=\langle i \alpha | U | j \beta \rangle.
\end{equation}

We now consider the entropy of a Haar-randomly chosen state, i.e., the linear entropy averaged over states from the Hilbert space $\mathcal{H}_A\otimes\mathcal{H}_B$. The general result is presented below:

\begin{theorem}[Average linear entropy $\langle \mathcal{E}_{m,n}\rangle$]
Let $dim(\mathcal{H}_A)=m$ and $dim(\mathcal{H}_B)=n$ with $m\le n$. For Haar-randomly chosen $\ket{\psi}\in \mathcal{H}_A\otimes\mathcal{H}_B$, we have 
\begin{equation}
    \langle \mathcal{E}_{m,n}\rangle=\int d\psi \mathcal{E}(\ket{\psi})=\frac{(m-1)(n-1)}{mn+1},
\label{eq_avle} 
\end{equation}
where $d\psi$ denotes the Haar measure on the pure-state manifold of the total system.
\label{lmm_avE}
\end{theorem}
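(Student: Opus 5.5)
The plan is to write the purity as the expectation of a swap operator on two copies and then use the second moment of the Haar measure on pure states. For $\ket{\psi}\in\mathcal{H}_A\otimes\mathcal{H}_B$ with reduced state $\rho_A$, the standard replica identity gives
\begin{equation}
\operatorname{Tr}(\rho_A^2)=\operatorname{Tr}\!\left[(S_A\otimes I_{BB'})\,\ket{\psi}\bra{\psi}^{\otimes 2}\right],
\end{equation}
where $S_A$ swaps the two copies of $A$ in $(\mathcal{H}_A\otimes\mathcal{H}_B)^{\otimes 2}$. I will check this by writing $\ket{\psi}=\sum_{i\alpha}\psi_{i\alpha}\ket{i\alpha}$ and contracting indices. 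Linearity then reduces the problem to computing $\int d\psi\,\ket{\psi}\bra{\psi}^{\otimes 2}$.

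For the second moment I will use the symmetric-subspace formula. With $N=mn$, unitary invariance of the Haar measure and Schur's lemma on the symmetric subspace give
\begin{equation}
\int d\psi\,\ket{\psi}\bra{\psi}^{\otimes 2}=\frac{I+S_{AB}}{N(N+1)},
\end{equation}
where $S_{AB}=S_A\otimes S_B$ is the full swap of the two copies. The normalization follows from taking the trace: $\operatorname{Tr}(I+S_{AB})=N^2+N$.

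It then remains to evaluate two traces, $\operatorname{Tr}(S_A\otimes I_{BB'})=m\,n^2$ and $\operatorname{Tr}\!\left[(S_A\otimes I)(S_A\otimes S_B)\right]=\operatorname{Tr}(I_{AA'}\otimes S_B)=m^2 n$, using $S_A^2=I$ and $\operatorname{Tr}S_X=\dim X$. This gives
\begin{equation}
\langle\operatorname{Tr}\rho_A^2\rangle=\frac{mn^2+m^2n}{mn(mn+1)}=\frac{m+n}{mn+1},
\end{equation}
and hence
\begin{equation}
\langle\mathcal{E}_{m,n}\rangle=1-\frac{m+n}{mn+1}=\frac{(m-1)(n-1)}{mn+1}.
\end{equation}
This is the claimed formula. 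The hypothesis $m\le n$ is not needed for this quantity, since $\operatorname{Tr}\rho_A^2=\operatorname{Tr}\rho_B^2$.

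The only nontrivial input is the second-moment identity. I will justify it either by citing the standard symmetric-subspace projector result or by observing that the integral commutes with every $V\otimes V$, is supported on the symmetric subspace, and is therefore proportional to the projector $(I+S_{AB})/2$ onto it.
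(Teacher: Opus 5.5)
Your proof is correct. The traces $\operatorname{Tr}(S_A\otimes I_{BB'})=mn^2$ and $\operatorname{Tr}(I_{AA'}\otimes S_B)=m^2n$, together with the normalization $N(N+1)$ for $N=mn$, give $\langle\operatorname{Tr}\rho_A^2\rangle=(m+n)/(mn+1)$.

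The paper does not derive this average purity. It quotes Lubkin's result $\langle\operatorname{Tr}\rho_A^2\rangle=(m+n)/(mn+1)$ and subtracts it from $1$ using the definition of the linear entropy. Your final step is therefore the same as the paper's. The difference is that you prove, via the replica/swap identity and the symmetric-subspace second moment, the ingredient the paper takes on citation. Your second-moment justification relies on the irreducibility of the symmetric subspace under $V\otimes V$, which is standard.

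The citation buys brevity. Your route is self-contained and also yields two things the paper leaves implicit:
\begin{itemize}
\item It shows directly that the result is symmetric in $m$ and $n$, so the hypothesis $m\le n$ is superfluous, as you note.
\item Only the second moment of the state ensemble enters. The same average therefore holds for any state $2$-design, not just Haar-random states, which fits the paper's closing remarks on $t$-designs.
\end{itemize}

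One cosmetic point: your $S_A$ swaps two replicas of $A$. This is a different operator from the paper's SWAP $S$ between $A$ and $B$ in Eq.~\eqref{eq_ep_closed}, so a distinct symbol would avoid confusion if this is inserted into the text.
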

This result is a direct consequence of the definition of the linear entropy (Eq. \ref{df_LE}) and Lubkin's result  \cite{Lubkin1978},
\begin{equation}
\left\langle \mathrm{Tr}(\rho_A^2)\right\rangle
= \frac{m+n}{mn+1}.
\end{equation}

Sometimes, it is more convenient to use the normalized linear entropy, 
\begin{equation}
{\mathcal{E}}(\ket{\psi})=\frac{m}{m-1}(1-\mathrm{Tr}(\rho_A^2)).
\label{eq_nle}
\end{equation}
This normalization ensures that ${\mathcal{E}} = 1$ for the maximally mixed state $\rho_A$. With this normalized linear entropy and for $m=n=d$, we have the following average linear entropy (cf. Eq. \ref{eq_avle}): 
\begin{equation}
    \langle {\mathcal{E}} \rangle=\frac{d(d-1)}{d^2+1}.
    \label{eq_avnle}
\end{equation}

We now turn our attention to the dynamics and first define the time-averaged entangling power over a finite number of evolution steps. In the following, $e_p(U^k)$ denotes the average linear entropy generated by the unitary $U$ after $k$ steps of evolution.
 
\begin{definition} [Time-averaged $e_p(U)$ after finite evolution]
Starting from a Haar-random product state, the average amount of entropy produced during a time evolution of $n$ steps is 
\begin{equation}
    \overline{e_p}(U^n)=\frac{1}{n+1}\left[\sum_{k=0}^n e_p(U^k)\right].
    \label{eq_fte}
\end{equation}
\label{df_fte}
\end{definition}
This definition can also be extended for long-time average:
\begin{equation}
    \overline{e_p}^{sat}(U)=\lim_{n\to \infty}\overline{e_p}(U^n).
    \label{eq_lte}
\end{equation}

\begin{proposition}
The saturation value $\overline{e_p}^{sat}$ is a well defined quantity, i.e., the limit in Eq. \ref{eq_lte} exists.
\label{prop1}
\end{proposition}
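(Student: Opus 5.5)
We will show that $k\mapsto e_p(U^k)$ is a finite linear combination of the functions $k\mapsto e^{i k \theta}$ with real phases $\theta$. A Cesàro mean of such a sequence always converges: to the sum of the coefficients of the terms with $\theta\equiv 0 \pmod{2\pi}$.

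\emph{Expressing $e_p(U^k)$ algebraically.} We start from the closed form Eq.~\ref{eq_ep_closed} with $\mathcal{E}$ the linear entropy. Equivalently, directly from Definition~\ref{df_ep},
\begin{equation}
e_p(U^k) = 1 - \operatorname{Tr}\!\left[ (U^{k}\otimes U^{k})\, \Omega \,(U^{\dagger k}\otimes U^{\dagger k})\, T_{A} \right],
\end{equation}
where $\Omega=\int d\psi_A d\psi_B\, (|\psi_A\psi_B\rangle\langle\psi_A\psi_B|)^{\otimes 2}$ is a fixed operator on two copies of $\mathcal{H}_A\otimes\mathcal{H}_B$, and $T_A$ swaps the $A$ factors of the two copies. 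This uses $\operatorname{Tr}\rho_A^2=\operatorname{Tr}[(\rho\otimes\rho)T_A]$ and linearity of the Haar integral. So $e_p(U^k)=1-\operatorname{Tr}[V^k \Omega V^{\dagger k} T_A]$ with $V=U\otimes U$, which is unitary on a finite-dimensional space of dimension $d^4$.

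\emph{Spectral expansion.} Next we diagonalize $V=\sum_j \lambda_j P_j$, where the $\lambda_j=e^{i\phi_j}$ are the distinct eigenvalues and the $P_j$ are orthogonal spectral projectors. This is possible because every unitary is normal; the finite dimension is what makes the sum finite. Substituting gives
\begin{equation}
e_p(U^k)=1-\sum_{j,l} e^{ik(\phi_j-\phi_l)}\, c_{jl}, \qquad c_{jl}=\operatorname{Tr}[P_j\Omega P_l T_A],
\end{equation}
with coefficients $c_{jl}$ independent of $k$.

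\emph{Taking the average.} We then average over $k=0,\dots,n$. Terms with $\phi_j=\phi_l$ are constant in $k$. For terms with $\phi_j\neq\phi_l \pmod{2\pi}$, the geometric sum gives
\begin{equation}
\left|\frac{1}{n+1}\sum_{k=0}^n e^{ik\theta}\right|\le \frac{2}{(n+1)|1-e^{i\theta}|}\to 0 .
\end{equation}
Since there are finitely many terms, the limit in Eq.~\ref{eq_lte} exists and equals
\begin{equation}
\overline{e_p}^{sat}(U)=1-\sum_{j,l:\ \phi_j\equiv\phi_l} c_{jl}=1-\sum_j c_{jj},
\end{equation}
because the eigenvalues $\lambda_j$ were taken to be distinct. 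As a by-product this gives an explicit formula in terms of the dephased (diagonal-ensemble) operator $\sum_j P_j\Omega P_j$.

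The main obstacle is only the first step: justifying that $e_p(U^k)$ is a fixed quadratic form in $U^k$ and $U^{\dagger k}$ with $k$-independent data. Writing linear entropy as a trace against the two-copy swap operator handles this. The only requirement is that the Haar integral can be taken inside, which is immediate because the integrand is a polynomial. For infinite-dimensional $\mathcal{H}$ this argument would fail, and one would need the mean ergodic theorem instead. Here the dimension is finite, so nothing further is needed.
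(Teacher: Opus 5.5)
Your proof is correct, and it takes a genuinely different route from the paper's.

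The paper argues only that consecutive Ces\`aro means satisfy $\delta_k=|\overline{e_p}(U^{k+1})-\overline{e_p}(U^{k})|\to 0$, using nothing beyond $0\le e_p(U^k)\le 1$. Vanishing consecutive differences do not imply convergence, and boundedness alone cannot suffice. The running means of \emph{any} bounded sequence have $\delta_k=O(1/k)$ and yet can oscillate: take $a_k=1$ for $4^j\le k<2\cdot 4^j$ and $a_k=0$ otherwise, and the means oscillate between about $1/3$ and $2/3$. So the paper's step does not establish the proposition; some structural input about $k\mapsto e_p(U^k)$ is needed.

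Your argument supplies exactly that input. The two-copy swap identity $\operatorname{Tr}\rho_A^2=\operatorname{Tr}[(\rho\otimes\rho)T_A]$, combined with finite dimensionality, makes $e_p(U^k)$ a trigonometric polynomial in $k$. The Ces\`aro limit then exists, and you get it explicitly as the dephased (diagonal-ensemble) expression $1-\sum_j\operatorname{Tr}[P_j\Omega P_jT_A]$.

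What the paper's argument aimed for, and yours gives up, is independence from the choice of entanglement measure $\mathcal{E}$ in Definition~\ref{df_ep}. You can recover that generality as follows. Write $U^k=\sum_j e^{ik\gamma_j}Q_j$, which is a continuous function of the torus point $(k\gamma_1,\dots,k\gamma_r)\bmod 2\pi$. By Weyl's theorem this orbit equidistributes on its closure, so the running means of $F(U^k)$ converge for every continuous $F$. In particular they converge for $e_p$ built from any continuous $\mathcal{E}$.
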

\begin{proof}
We note that $\delta_k=\left|\overline{e_p}(U^{k+1})-\overline{e_p}(U^{k})\right|=\left|(\frac{k+1}{k+2}-1)\overline{e_p}(U^{k})+\frac{e_p(U^{k+1})}{k+2}\right|$. Now, since $0\le e_p(U^{k})\le 1$ and $ 0\le \overline{e_p}(U^{k}) \le 1$ for each $k$, we have $\lim_{k\to \infty} \delta_k =0$.
\end{proof}

The long-time average of the entangling power is a well-defined quantity, and for each unitary operator $U$, we will have one unique saturation value $\overline{e_p}^{sat}(U)$. But in general, there can be multiple such saturation values, called fixed points, for a quantum system of given Hilbert space. As will be seen in Section \ref{sec_2}, each fixed point represents a class of unitary operators. Each operator from a class corresponds to the same fixed point.
\begin{definition}[Entropic fixed point]
For a given unitary operator $U$, the finite-time averaged quantity $\overline{e_p}(U^n)$ approaches a saturation value as $n \to \infty$. This limiting value, denoted as $\overline{e_p}^{sat}(U)$, is an entropic fixed point of the evolution generated by $U$.
\end{definition}

We are now in a position to discuss the different categories of dynamical systems within the present framework.

A Bernoulli system forgets its past every single moment during its dynamics, i.e., it does not even have short-term memory. In our formalism, we define it below.
\begin{definition}[A Bernoulli system]
A system described by the evolution operator $U$ is called Bernoulli if $ e_P(U^n) := \langle \mathcal{E} \rangle ~\forall n\ge 1$.
\label{df_berno}
\end{definition}

A system with the mixing property may carry memory of its past, but the correlation decays with time. In the current formalism, we have:

\begin{definition}[A  mixing operator]
    An operator $U$ describing the evolution of a system is called mixing if $\lim_{n\to \infty} e_P(U^n) := \langle \mathcal{E} \rangle$.
\label{df_mixing}
\end{definition}
In a similar way, one can also define weak mixing operator for which the correlation decays in an average sense. Following the ergodic hierarchy, we finally define an ergodic system as:

\begin{definition}[An ergodic system]
    A system described by the evolution operator $U$ is called ergodic if $\overline{e_p}^{sat}(U):=\langle {\mathcal{E}}
    \rangle$.
\label{df_ergodic}
\end{definition}

\begin{remark} 
Since the definition of entangling power involves product states, in the present formalism of characterizing the dynamical system, the evolution is understood to begin from a product state. Thus, the initial state for the dynamics is chosen as a Haar-random state from the manifold of zero-entanglement (zero-entropy) states.

This restriction on the initial state can be somewhat relaxed by shifting the origin of time. However, this is not possible for a Bernoulli system with small $n$ (see Definition \ref{df_berno}).
\end{remark}

Next, we prove the ergodic hierarchy between different dynamical systems within our formalism. 
\begin{proposition}
    A Bernoulli operator $U$ is also mixing.
\end{proposition}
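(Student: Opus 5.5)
The plan is to unwind Definitions~\ref{df_berno} and \ref{df_mixing} and observe that the Bernoulli condition forces the sequence $\{e_p(U^n)\}_{n\ge1}$ to be constant, with its constant value equal to the mixing target. No estimate is needed; the argument is a statement about limits of eventually constant sequences.

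First, I will fix a Bernoulli operator $U$. By Definition~\ref{df_berno}, $e_p(U^n)=\langle\mathcal{E}\rangle$ for every $n\ge 1$, where $\langle\mathcal{E}\rangle$ is the Haar value of Theorem~\ref{lmm_avE}, or its normalized form in Eq.~\ref{eq_avnle}. I will use the same normalization in both definitions so that the two targets coincide. Then, for every $\varepsilon>0$, I can take $N=1$, and $|e_p(U^n)-\langle\mathcal{E}\rangle|=0<\varepsilon$ holds for all $n\ge N$. Hence $\lim_{n\to\infty}e_p(U^n)$ exists and equals $\langle\mathcal{E}\rangle$. This is exactly the condition in Definition~\ref{df_mixing}, so $U$ is mixing.

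The only point needing care is that the $n=0$ term, $e_p(I)=0$, is excluded from the Bernoulli condition. This does not affect the limit, because the limit depends only on the tail of the sequence. I will remark on this explicitly, since it is also what distinguishes the Bernoulli condition from the mixing condition: mixing only asks for the tail to converge.

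I do not expect a real obstacle here. If desired, I will also note that the converse fails, so that the hierarchy is strict. A mixing $U$ whose $e_p(U^n)$ approaches $\langle\mathcal{E}\rangle$ only asymptotically is not Bernoulli. The nonintegrable kicked Ising chain discussed later would serve as such an example.
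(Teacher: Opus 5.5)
Your proof is correct and follows the paper's argument: the Bernoulli condition makes $e_p(U^n)$ equal to $\langle\mathcal{E}\rangle$ on the tail, so taking $n\to\infty$ gives the mixing condition directly. Your explicit $\varepsilon$--$N$ check with $N=1$ and your remark about the excluded $n=0$ term are a more careful version of the paper's one-line proof.
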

\begin{proof}
    The Definition \ref{df_berno} of a Bernoulli operator implies that a system is Bernoulli if $ e_P(U^n) := \langle {\mathcal{E}} \rangle ~\forall n\ge N$ for any fixed $N$. Taking $n \to \infty$ proves the proposition. 
\end{proof}

\begin{proposition}
     A mixing operator $U$ is also ergodic.
\end{proposition}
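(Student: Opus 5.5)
The plan is to show that the Cesàro mean of a convergent sequence converges to the same limit. Set $a_k = e_p(U^k)$ and $L=\langle \mathcal{E}\rangle$. By Definition~\ref{df_mixing}, $a_k \to L$. By Definition~\ref{df_fte} and Eq.~\ref{eq_lte}, we must show
\begin{equation}
\overline{e_p}(U^n)=\frac{1}{n+1}\sum_{k=0}^n a_k \;\longrightarrow\; L .
\end{equation}
This then yields $\overline{e_p}^{sat}(U)=L$, which is Definition~\ref{df_ergodic}. Proposition~\ref{prop1} guarantees that the limit exists in any case, but the argument below establishes both existence and its value directly.

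The key step is the standard Cesàro estimate. Fix $\epsilon>0$ and choose $N$ such that $|a_k-L|<\epsilon/2$ for all $k> N$. For $n>N$, split the sum into the indices $k\le N$ and $k>N$:
\begin{equation}
\left|\overline{e_p}(U^n)-L\right|\le \frac{1}{n+1}\sum_{k=0}^{N}|a_k-L| + \frac{1}{n+1}\sum_{k=N+1}^{n}|a_k-L|.
\end{equation}
In the first term, each summand is bounded because $0\le a_k\le 1$ and $0\le L\le 1$. Hence that term is at most $(N+1)/(n+1)$, which is below $\epsilon/2$ for $n$ large enough. The second term is at most $\frac{n-N}{n+1}\cdot\frac{\epsilon}{2}<\epsilon/2$. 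Together these give $|\overline{e_p}(U^n)-L|<\epsilon$ for all sufficiently large $n$.

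There is no real obstacle here. The only point needing care is the uniform bound on the early terms, and it comes from the boundedness of entangling power, the same fact already used in the proof of Proposition~\ref{prop1}. It is also worth remarking that the $k=0$ term $e_p(I)=0$ does not affect the limit, since it is absorbed into the finite head of the sum.
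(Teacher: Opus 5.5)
Your proof is correct and follows essentially the paper's argument. Both split the Ces\`aro sum at the index $N$ given by mixing, bound the finite head by $(N+1)/(n+1)\to 0$, and bound the tail by $\epsilon$ times a factor at most one; your version keeps the estimates at finite $n$ instead of manipulating inequalities inside the limit as the paper does.
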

\begin{proof}
A mixing operator $U$, as in Definition \ref{df_mixing}, can be redefined in the following operational form. The operator $U$ is called a mixing, if, for any $\epsilon$, we can find a number $N$ so that 
\begin{equation}
|e_p(U^n)-\langle {\mathcal{E}} \rangle|<\epsilon ~~~ \forall n\ge N. 
\label{df2_mixing}
\end{equation}

Using the explicit expression of $\overline{e_p}^{sat}$, from Eqs. \ref{eq_fte} and \ref{eq_lte}, we get\\

$|\overline{e_p}^{sat}(U)-\langle {\mathcal{E}} \rangle|$
\begin{equation*}
    \begin{split}
     &=\lim_{m\to \infty} \frac{1}{m+1}\abs{\sum_{n=0}^m e_p(U^n) - (m+1)\langle {\mathcal{E}} \rangle }\\
     &=\lim_{m\to \infty} \frac{1}{m+1}\left| \sum_{n=0}^N 
     (e_p(U^n) - \langle {\mathcal{E}} \rangle) \right.\\
     &\left. ~~~~~~~~~~~~~~~~~~~~~~~ + \sum_{n=N+1}^m 
     (e_p(U^n) - \langle {\mathcal{E}} \rangle) \right|.
    \end{split}
\end{equation*}

Taking $N \ll m$, we obtain
\[
|\overline{e_p}^{sat}(U)-\langle {\mathcal{E}} \rangle|
\le
\lim_{m\to \infty}
\frac{1}{m+1}
\sum_{n=N+1}^m
\left|e_p(U^n)-\langle {\mathcal{E}} \rangle\right|.
\]
We note that the upper bound of the ignored term is 
\[
\lim_{m\to \infty}
\frac{1}{m+1} \sum_{n=0}^N \left| e_p(U^n)-\langle {\mathcal{E}} \rangle \right|\le \lim_{m\to \infty}
\frac{N+1}{m+1} = 0.
\]

Now if the system is mixing, then following Eq.~\ref{df2_mixing}, we get
\[
|\overline{e_p}^{sat}(U)-\langle {\mathcal{E}} \rangle|
<
\lim_{m\to \infty}
\frac{m-N-1}{m+1}\epsilon
=
\epsilon.
\]
This proves our proposition that mixing operators are also ergodic.
\end{proof}

Although, for a mixing system, $e_p(U^n)$ eventually approaches the saturation value $\langle {\mathcal{E}} \rangle$, it is also of interest to characterize how this convergence occurs with time. For this purpose, we define a Lyapunov-like exponent.

\begin{definition}[Lyapunov-like exponent]
For a mixing system, the time-dependent entangling power approaches saturation in the following way: 
\begin{equation}
|\langle {\mathcal{E}} \rangle- {e_p}{(U^n)}|\simeq \langle {\mathcal{E}} \rangle e^{-\lambda n}, 
\end{equation} 
for large $n$. Here $\lambda$ is the Lyapunov-like exponent. 
\end{definition}
Above form is motivated from the RMT results, as discussed in Section \ref{sec_3}. This exponent $\lambda$ characterizes the inverse timescale over which the system approaches the state of maximal mixing accessible under the dynamics generated by $U$.

\section{Two qubit unitary operators}
\label{sec_2}
We next consider two-qubit gates within our framework; for that, the Cartan decomposition of the Lie group $SU(4)$ is the best choice because it can neatly capture the nonlocal feature of the two-qubit gates.  Any two-qubit unitary lives in
the fifteen-dimensional group $SU(4)$, and its generators split
naturally into a local part  and a non-local part . Because local operations alone cannot generate or alter
entanglement, it is useful to isolate the non-local content from the
rest of the dynamics.The Cartan theorem\cite{Makhlin2002,Zhang2003}, states that any $U' \in SU(4)$ can be written as
\begin{equation}
U' = (u_{1}\otimes u_{2})\, U \,(u_{3}\otimes u_{4}),
\label{eq:cartan_re}
\end{equation}
where each $u_{i}\in SU(2)$ $(i=1,\ldots,4)$ is a single-qubit
rotation, and
\begin{equation}
U=\exp\left[-i\left(c_{1}\,\sigma_{x}\otimes\sigma_{x}
+c_{2}\,\sigma_{y}\otimes\sigma_{y}
+c_{3}\,\sigma_{z}\otimes\sigma_{z}\right)\right]
\label{eq:canonical_re}
\end{equation}
is the non-local operator.We can express the whole expression of the entangling power of two-qubitubit gate by $(c_{1},c_{2},c_{3})$, known as the Cartan coordinates.The main generators of the two qubit gates $\sigma_{x}\otimes\sigma_{x}$,
$\sigma_{y}\otimes\sigma_{y}$, $\sigma_{z}\otimes\sigma_{z}$ generate a
maximal commuting subalgebra --- they form the Cartan
subalgebra of the non-local sector of $\mathfrak{su}(4)$:
\begin{equation}
[\sigma_{\alpha}\otimes\sigma_{\alpha},\,
\sigma_{\beta}\otimes\sigma_{\beta}]=0,
\qquad \alpha,\beta\in\{x,y,z\}.
\end{equation}
  \cite{Kraus2001}, these commuting operators are simultaneously diagonal in the magic basis.

$$
\begin{array}{ll}
|\phi_1\rangle = \dfrac{1}{\sqrt{2}}\left(|00\rangle + |11\rangle\right),
&
|\phi_2\rangle = \dfrac{i}{\sqrt{2}}\left(|00\rangle - |11\rangle\right),
\\[6pt]
|\phi_3\rangle = \dfrac{i}{\sqrt{2}}\left(|01\rangle + |10\rangle\right),
&
|\phi_4\rangle = \dfrac{1}{\sqrt{2}}\left(|01\rangle - |10\rangle\right).
\end{array}
$$
which consists of the Bell states up to global phases. Consequently, the nonlocal operator $A$ is diagonal in this basis,
\begin{equation}
U=\sum_{k=1}^{4}e^{-i\gamma_k}
|\Phi_k\rangle\langle\Phi_k|,
\end{equation}
where the eigenphases are
\begin{align}
\gamma_1 &= c_1-c_2+c_3,\nonumber\\
\gamma_2 &= -c_1+c_2+c_3,\nonumber\\
\gamma_3 &= -c_1-c_2-c_3,\nonumber\\
\gamma_4 &= c_1+c_2-c_3.
\end{align}
Since the eigenvectors remain unchanged under repeated application, the
$n$th power simply multiplies each eigenphase by $n$,
\begin{equation}
\begin{split}
U^{n}
&=
\sum_{k=1}^{4}
e^{-in\gamma_k}
|\Phi_k\rangle\langle\Phi_k| \\
&=
\exp\!\left[
-in\left(
c_{1}\sigma_x\otimes\sigma_x
+c_{2}\sigma_y\otimes\sigma_y
+c_{3}\sigma_z\otimes\sigma_z
\right)
\right].
\end{split}
\end{equation}
Therefore, the Cartan coordinates evolve linearly under iteration,
$
(c_1,c_2,c_3)\longrightarrow (nc_1,nc_2,nc_3),
$
with the understanding that they are subsequently folded back into the Weyl chamber using the canonical equivalence relations of two-qubit gates. This three-parameter description is, however, not one-to-one: several
different triples $(c_{1},c_{2},c_{3})$ can represent gates that are
equivalent under local operations, owing to permutations of the
coordinates, sign changes arising from local basis choices, and the
discrete Weyl-group symmetries inherent to $SU(4)$. Once these
redundancies are removed, every local-equivalence class has exactly
one representative lying inside a fundamental region called the Weyl
chamber, a tetrahedral domain defined by
$
0\le c_{3}\le c_{2}\le c_{1}\le \frac{\pi}{4}.
\label{eq:weyl_re}
$
Hence, each point of this chamber corresponds to exactly one two-qubit gate, up to local equivalence.
For a canonical gate $U$ with Weyl-chamber coordinates
$(c_{1},c_{2},c_{3})$, the entangling power of the $n$-fold
application $U^{n}$ is
\begin{equation}
e_{p}(U^{n})
=\frac{1}{6}\Big[3-\big(C_{12}(n)+C_{23}(n)+C_{31}(n)\big)\Big],
\label{eq:ep_Un}
\end{equation}
where\cite{Balakrishnan2010}
\begin{equation*}
C_{ij}(n)\equiv\cos(2nc_{i})\cos(2nc_{j}).
\end{equation*}
For two-qubit pure states ($m=n=2$), \ref{eq_avnle}  becomes $\langle  \mathcal{E}\rangle=2/5=0.4$. ,
which serves as the random-state benchmark for bipartite entanglement.
We know that
\begin{equation}
D_N(\theta)
=\sum_{n=0}^{N}\cos(n\theta)
=
\frac{\sin[(N+1)\theta/2]\cos(N\theta/2)}
{\sin(\theta/2)}.
\label{cos}
\end{equation}
We can extend \ref{eq:ep_Un} using \ref{cos} to find the quantity 

\begin{align}
\overline{e_{p}}(U^n)
&=
\frac12
-\frac{1}{12(n+1)}
\sum_{(ij)}
\Big[
D_n\!\left(2(c_i-c_j)\right)
\nonumber\\
&\hspace{2.8cm}
+
D_n\!\left(2(c_i+c_j)\right)
\Big],
\label{eq:ep_finite}
\end{align}
and the summation over the pairs
$(ij)=(12),(23),(31)$.
From the first look it is evident that whenever N tends to infinity the average entangling power tends to 0.5 ie
\begin{equation}
\lim_{N\rightarrow\infty}
\overline{e_{p}}(U^n)
=\frac12.
\label{eq:ep_generic_limit}
\end{equation}
The exceptional cases arise when one or more resonance conditions
$c_i\pm c_j=k\pi$ ($k\in\mathbb{Z}$) are satisfied. 
\begin{equation}
F(\theta)
=
\lim_{n\rightarrow\infty}
\frac{1}{n+1}
\sum_{k=0}^{n}
\cos(k\theta)
=
\delta_{\theta},
\end{equation}
where
\[
\delta_{\theta}
=
\begin{cases}
1, & \theta=2\pi k,\\
0, & \text{otherwise},
\end{cases}
\qquad
k\in\mathbb{Z},
\]
one obtains
\begin{equation}
\big\langle C_{ij}\big\rangle
=
\frac12
\left(
\delta_{c_i-c_j}
+
\delta_{c_i+c_j}
\right),
\label{eq:Cij_average}
\end{equation}
where $\delta_x=1$ if $x=k\pi$ and $0$ otherwise. Substituting
Eq.~\eqref{eq:Cij_average} into the long-time average of
Eq.~\eqref{eq:ep_Un} gives
infinite-time averaged entangling power can therefore be written as
\begin{equation}
\begin{split}
\lim_{n\rightarrow\infty}\overline{e_p}(U^n)
&=
\frac{1}{2}
-
\frac{1}{12}
\sum_{(ij)}
\left[
\delta_{c_i-c_j}
+
\delta_{c_i+c_j}
\right] \\
&=
\frac{1}{2}
-
\frac{m}{12},
\label{eq:ep_infinite}
\end{split}
\end{equation}
where $
m
=\sum_{(ij)}
\left[
\delta_{c_i-c_j}
+
\delta_{c_i+c_j}
\right],
$
with $(ij)=(12),(23),(31)$, denotes the total number of surviving
resonance contributions. Here, $\delta_x=1$ for $x=0$ and vanishes
otherwise.The ordering of the Weyl chamber,
\begin{equation}
0\leq c_3\leq c_2\leq c_1\leq\frac{\pi}{4},
\end{equation}
strongly restricts the possible resonance patterns. In particular,
$\delta_{c_i+c_j}$ can be nonzero only when both $c_i$ and $c_j$
vanish. The allowed values of the resonance count are consequently
\begin{equation}
m\in\{0,1,2,3\}.
\end{equation}
They are referred to in the Table \ref{tab:ep_cases}
\begin{table}[ht]
\centering
\caption{Different cases of the Cartan parameters and the corresponding
average entangling power $\overline{e_p}$. The values $m=4$ and $m=5$ are excluded
by the Weyl-chamber constraints.}
\label{tab:ep_cases}
\begin{tabular}{c c c}
\hline
$m$ & Conditions on $(c_1,c_2,c_3)$ & $\overline{e_p}$ \\
\hline
$0$ & $c_1 \neq c_2 \neq c_3$ & $1/2$ \\[6pt]
$1$ & $c_3 < c_2 < c_1$ & $5/12$ \\[6pt]
$2$ & $c_1>0,\quad c_2=c_3=0$ & $1/3$ \\[6pt]
$3$ & $c_1=c_2=c_3>0$ & $1/4$ \\[6pt]

\hline
\end{tabular}
\end{table}
It follows that the infinite-time averaged entangling power is
restricted to the discrete set
\begin{equation}
\lim_{n\rightarrow\infty}\overline{e_p}(U^n)
\in
\left\{
\frac{1}{2},
\frac{5}{12},
\frac{1}{3},
\frac{1}{4}
\right\},
\label{eq:ep_discrete}
\end{equation}
corresponding to $m=0,1,2,3$, respectively. Thus, the infinite-time average cannot assume an arbitrary value within the
Weyl chamber. The case $m=6$ corresponds to the identity gate, with $c_1=c_2=c_3=0$. For the SWAP gate, $c_1=c_2=c_3=\pi/4$, which has zero entangling power and therefore does not generate nontrivial entanglement dynamics under repeated time evolution.  In particular, equality with the two-qubit Lubkin value,
$\langle  \mathcal{E}\rangle=2/5$, would require
\begin{equation}
\frac{1}{2}-\frac{m}{12}=\frac{2}{5},
\qquad\Longrightarrow\qquad
m=\frac{6}{5},
\end{equation}
which is incompatible with the allowed integer resonance counts.
Hence,
\begin{equation}
\lim_{n \rightarrow  \infty}\overline{e_p}(U^n)
\neq
\langle  \mathcal{E}\rangle
=
\frac{2}{5}
\end{equation}
for every point in the Weyl chamber. The closest allowed value is
$5/12\simeq0.4167$, obtained when a single resonance condition is
satisfied, for example $c_1=c_2>c_3>0$.

From this observation, it is evident that the two-qubit system does not exhibit ergodic behavior. In the limit $n\rightarrow\infty$, the time-averaged entangling power $\overline{e_p}(U^n)$ is restricted to the discrete values $0$, $0.25,0.333
$, $0.5$, and $0.4167$, whereas the Lubkin value for the average linear entropy is $\langle  \mathcal{E}\rangle=0.4$. Since none of the asymptotic values of $\overline{e_p}(U^n)$ coincides with $\langle  \mathcal{E}\rangle$, the long-time averaged entanglement does not approach the corresponding Haar-random value, indicating the absence of ergodicity in the two-qubit system.Coming to the discussion of mixing, from Eq.~\ref{eq:ep_Un}, it is evident that, in the limit $n\rightarrow\infty$, the entangling power $e_p(U^n)$ does not saturate to a stationary value. This is because $e_p(U^n)$ is composed of undamped cosine terms, which continue to oscillate under repeated applications of the unitary. Consequently, the entangling power does not approach a well-defined asymptotic value at long times, indicating that the two-qubit dynamics is not mixing. Furthermore, since mixing is a stronger property than ergodicity, the absence of ergodicity established above directly implies the absence of mixing.

\section{Many-body System: Kicked Ising Model}
\label{sec_3}
We are extending our study to a many-body system. Specifically, we consider a kicked Ising spin chain, consisting of spin-$1/2$ particles arranged on a one-dimensional lattice of $L$ sites ($L$ is taken to be even integer) and governed by a periodically kicked Ising Hamiltonian. The corresponding time-dependent Hamiltonian is given by
\begin{equation}
H(t)
=
H_{0}
+
V
\sum_{k=-\infty}^{\infty}
\delta\left(k-\frac{t}{\tau}\right),
\end{equation}
where the static part of the Hamiltonian is
\begin{equation}
H_{0}
=
\sum_{j=1}^{L-1}
\sigma_{j}^{z}\sigma_{j+1}^{z}
+
\sum_{j=1}^{L}
h_{j}^{z}\sigma_{j}^{z},
\end{equation}
and the periodically applied kick is
\begin{equation}
V
=
\sum_{j=1}^{L}
\left(
h_{j}^{x}\sigma_{j}^{x}
+
h_{j}^{y}\sigma_{j}^{y}
\right).
\end{equation}
Here, $\sigma_{j}^{\alpha}$, with $\alpha=x,y,z$, denotes the Pauli operator acting on the $j$th site, and $\tau$ represents the time interval between two successive kicks. We can express the state of the system on the $ n+1$th kick after the $n$th kick by the unitary Floquet operator
\begin{equation}
|\psi(n+1)\rangle
=
U(\tau,\mathbf{h})
|\psi(n)\rangle,
\end{equation}
where the Floquet operator is given by, 
\begin{equation}
U(\tau,\mathbf{h})
=
e^{-i\tau V}
e^{-i\tau H_{0}}.
\end{equation}
After $n$ kicks, the time-evolved state is therefore
\begin{equation}
|\psi(n)\rangle
=
U^{n}(\tau,\mathbf{h})
|\psi(0)\rangle.
\end{equation}
The integrability of the system is determined by the choice of the external magnetic field $\mathbf{h}=(h_x,h_y,h_z)$. It is well known that  this model is integrable in the following two limits: 
1)  
$\mathbf{h}=(h_x=1,h_y=0,h_z=0)$, and 
2)
$\mathbf{h}=(h_x=0,h_y=1,h_z=1)$, and 
otherwise non-integrable, exhibits quantum-chaotic behavior~\cite{arul.2005}. To investigate the departure from integrability, we parametrize the field as $\mathbf{h}=(h_x,h_y,h_z)=\left(1-\delta,\sqrt{\delta(1-\delta/2)},\sqrt{\delta(1-\delta/2)}\right)$, such that $h_x^2+h_y^2+h_z^2=1$ for $0\leq\delta\leq1$. In our numerical analysis, we vary $\delta$ to investigate the departure from integrability. In addition, we focus on two specific parameter sets, which we refer to as Set I and Set NI. Set I corresponds to $\mathbf{h}=(h_x,h_y,h_z)=(1,0,0)$ and represents an integrable point, while Set NI corresponds to $\mathbf{h}=(h_x,h_y,h_z)=( 0.9045, 0.3457,0.8090)$, a non-integrable point. We choose the latter parameter set to facilitate comparison with the results reported in Ref.~\cite{Pal2018}. Without loss of generality, most of our numerical calculations are performed with $\tau=\pi/4$, since some of the analytical results simplify in this limit.

\paragraph*{Integrable limit:}First, we focus on the integrable limit, corresponding to Set I. This limit is exactly solvable. In this case, the operator Schmidt rank of $U^n$ increases as $2^n$ for $1\leq n\leq L$, with all the associated Schmidt coefficients being identical. Consequently, the linear operator entanglement entropy can be obtained in closed form~\cite{Pal2018}:
\begin{equation}
E(U^{n})
=1-2^{-n},
\end{equation}
Similarly, the entangling power of the $n$-th power of the unitary is given by~\cite{Pal2018}
\begin{equation}
e_p(U^n)=
\frac{1+2^L-2^{L-n}-2^{n-1}}
{\left(1+2^{L/2}\right)^2}.
\label{eq:eqn}
\end{equation}

Moreover, since, in this limit, the entangling power is periodic in $n$ with period $2L$, the long-time average of $e_p$ can be obtained by averaging over a single complete period, i.e.,

\begin{equation}
\lim_{n\rightarrow\infty}\overline{e_p}(U^n)
=
\frac{1}{2L}
\sum_{k=0}^{2L-1} e_p(U^k)=\overline{e_p}^{sat}.
\end{equation}

Furthermore, the entangling power satisfies the reflection symmetry
$e_p(U^k)=e_p(U^{2L-k})$,  and $e_p(U^0)=0$. The long-time average can therefore be written as
\begin{equation}
\overline{e_p}^{\,\mathrm{sat}}
=
\frac{1}{2L}
\left[
2\sum_{k=1}^{L-1}e_p(U^k)
+e_p(U^L)
\right].
\label{eq:periodic_average}
\end{equation}

Substituting Eq.~(\ref{eq:eqn}) into Eq.~(\ref{eq:periodic_average}), we obtain
\begin{equation}
\begin{split}
\overline{e_p}^{\,\mathrm{sat}}
=
\frac{1}{
2L\left(1+2^{L/2}\right)^2
}
\Bigg[
&2\sum_{k=1}^{L-1}
\left(
1+2^L-2^{L-k}-2^{k-1}
\right)
\\
&+2^{L-1}
\Bigg].
\end{split}
\label{eq:sat_sum}
\end{equation}
Evaluating the geometric sums gives
\begin{equation}
\begin{split}
\overline{e_p}^{\,\mathrm{sat}}
=
\frac{
4L+8+(4L-9)2^L
}{
4L\left(1+2^{L/2}\right)^2
}.
\end{split}
\label{eq:sat_exact}
\end{equation}
Therefore, in the large $L$ limit,  the leading large-$L$ behavior of the saturation value is
\begin{equation}
\overline{e_p}^{\,\mathrm{sat}}
=
1-\frac{9}{4L}
+\mathcal{O}\!\left(2^{-L/2}\right).
\label{ep sat int}
\end{equation}
It also implies, in the $L\to \infty$ limit, $\overline{e_p}^{\,\mathrm{sat}} \to 1$, which also coincides with Lubkin’s result, i.e., Haar average linear entropy  $\langle {\mathcal{E}} \rangle$ (see Sec.~\ref{sec:formalism}). This implies that, within our framework, the integrable limit of the model is also ergodic in the thermodynamic limit. However, since the entangling power is periodic in $n$ with period $2L$, the limit $
\lim_{n\rightarrow\infty} e_p(U^n)$
does not exist (although $\lim_{n\rightarrow\infty}\overline{e_p}(U^n)$ still exists). This indicates that the dynamics of the entangling power does not belong to the mixing class. In short, our results for the integrable limit show that, in the thermodynamic limit $L\to\infty$,
\begin{equation}
\lim_{n\to\infty}\overline{e_p}(U^n)
=\langle\mathcal{E}\rangle=1,
\nonumber
\end{equation}
indicating ergodic dynamics. In contrast,
\begin{equation}
\lim_{n\to\infty}e_p(U^n)
\quad\text{does not exist},
\nonumber
\end{equation}
indicating non-mixing dynamics.

\paragraph*{RMT model:} While the integrable limit is exactly solvable, in the $\delta>0$ regime, the model is non-integrable, and an analytical expression for $e_p$ is no longer available as before. Hence, numerical analysis is the only viable approach. In this regime, however, a hybrid RMT model, as proposed in Ref.~\cite{Pal2018}, provides a useful framework for predicting the dynamics of the entangling power. We first introduce this hybrid RMT model and its analytical predictions. We then compare the results of our model in the non-integrable regime with those obtained from the RMT model.

The spin chain is divided into two equal subsystems,
\begin{equation}
A=\{1,\ldots,L/2\},
\qquad
B=\{L/2+1,\ldots,L\},
\end{equation}
such that the Hilbert-space dimension of each subsystem is $2^{L/2}$. In general, the Floquet operator can be expressed schematically as
\begin{equation}
U(\tau,\mathbf{h})
=
\left(U_{A}\otimes U_{B}\right)
U_{AB}(\tau),
\end{equation}
where $U_{A}$ and $U_{B}$ describe dynamics internal to the two subsystems, while the interaction across the bipartition is generated by
\begin{equation}
U_{AB}(\tau)
=
\exp\left(
-i\tau
\sigma_{L/2}^{z}
\sigma_{L/2+1}^{z}
\right).
\end{equation}

Since the $U_{AB}(\tau)$ is a non-local unitary gate, it produces an entanglement between A and B, and the entanglement will grow in each kick. The RMT
model $U_{RMT}(\tau )$ is a hybrid one wherein we replace $U_A$, $U_B$
by local random unitary matrices and retain the interaction
as-is. As $U_A$ and $U_B$ are merely $L/2$ length chains of the original
system, this can be expected to be a reasonable model if $U(\tau,\mathbf{h})$
is sufficiently non-integrable (i.e., $\delta >0$) and possesses random matrix properties.

Within this hybrid RMT model, one can derive $\overline{e_p}(U^n)$ using the expression for $e_p(U^n)$ given in Ref.~\cite{Pal2018}. For $\tau=\pi/4$, this expression reduces to
\begin{equation}
e^{\mathrm{RMT}}_p(U^n)=1-\frac{1}{2^n}=1-e^{-n\ln2}.
\label{ep mix rmt sat}
\end{equation}
in the thermodynamic limit. 
Consequently, the corresponding average is
\begin{equation}
\overline{e^{\mathrm{RMT}}_p}(U^n)
=\frac{n-1+2^{-n}}{n+1}
=\frac{1-\frac{1}{n}+\frac{2^{-n}}{n}}{1+\frac{1}{n}}.
\label{ep sat rmt}
\end{equation}
Thus, unlike in the integrable case, $\lim_{n\to\infty} e^{\mathrm{RMT}}_p(U^n)$ exists and approaches unity. The same holds for the time-averaged entangling power, for which $\lim_{n\to\infty}\overline{e^{\mathrm{RMT}}_p}(U^n)=1$. 
In short, our results for the RMT model shows that, in the thermodynamic limit $L\to\infty$,
\begin{equation}
\lim_{n\to\infty}\overline{e_p}(U^n)
=\langle\mathcal{E}\rangle=1,
\nonumber
\end{equation}
indicating ergodic dynamics. Also, 
\begin{equation}
\lim_{n\to\infty}e_p(U^n)
=\langle\mathcal{E}\rangle=1,
\nonumber
\end{equation}
indicating mixing dynamics.

This suggests that the entangling-power dynamics in the RMT model is not only ergodic but also mixing. In addition, the distance of $e^{\mathrm{RMT}}_p(U^n)$ from its fixed point at $1$ decays exponentially with $n$ (see Eq.~\ref{ep mix rmt sat}), whereas the leading finite-$n$ correction to the time-averaged entangling power is of order $1/n$:
\begin{equation}
\overline{e^{\mathrm{RMT}}_p}(U^n)=1-\frac{2}{n}+O\left(\frac{1}{n^2}\right).
\end{equation}
These RMT results provide a natural motivation for our analysis of the finite-$L$ numerical data in the non-integrable regime later, where the exact analytical prediction is not possible. In particular, at large $n$, we are going to fit the entangling power and its time average to the forms
\begin{equation}
e_p(U^n)=f-e^{-\lambda n},
\label{fitting mixing}
\end{equation}
and
\begin{equation}
\overline{e_p}(U^n)=f-\frac{B}{n}.
\label{fitting ergodic}
\end{equation}
In the thermodynamic limit, $L\to\infty$, we therefore expect the fitting parameters to approach their RMT values,
\begin{equation}
f\to 1,\qquad \lambda\to\ln 2,\qquad B\to 2,
\label{eq: thermodynamic fitting para}
\end{equation}
in the non-integrable regime. Next, we check our prediction using numerical calculations.

\begin{figure}
    \centering
    \includegraphics[width=1 \linewidth]{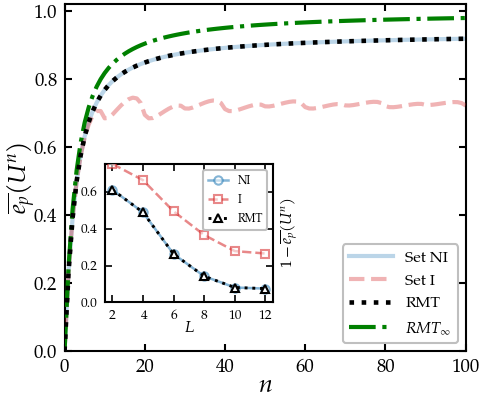}
      \caption{Entangling power of the Floquet time-evolution operator versus the number of time steps $n$, for $L=10$. 
   Results are presented for integrable set I, non-integrable Set NI,  finite and infinite-sized RMT model.
   Inset show $1-\overline{e_p}(U^n)$ vs the system size $L$,for $n=100$.}
    \label{L10 spin chain}
\end{figure}

\paragraph*{Numerical results:} Figure~\ref{L10 spin chain} shows the variation of the time-averaged entangling power, $\overline{e}_p(U^n)$, with the number of Floquet cycles $n$ for $L=10$. The first remarkable observation is that the data for the non-integrable Set NI are almost indistinguishable from those of the finite-sized RMT model. In all cases, $\overline{e}_p(U^n)$ initially grows rapidly with $n$ and then gradually approaches the fixed point at $1$. For reasonably large $n$, the integrable data consistently remain below those for the non-integrable and RMT cases, indicating a slower approach towards the fixed point. 
The inset shows the dependence of the time-averaged entangling power on the system size $L$ at a fixed, reasonably large $n=100$. With increasing $L$, the data for both the integrable and non-integrable cases become progressively closer to the fixed point at $1$. However, for every system size considered, the non-integrable results remain closer to $1$ than the corresponding integrable results. In the strict $n\to\infty$ and $L\to \infty$ limit, $\overline{e}_p(U^n)$ approaches $1$.
These numerical results indicate that the entangling power dynamics is ergodic in nature in both the integrable and non-integrable regimes.

\begin{figure}
    \centering
\includegraphics[width=1\linewidth]{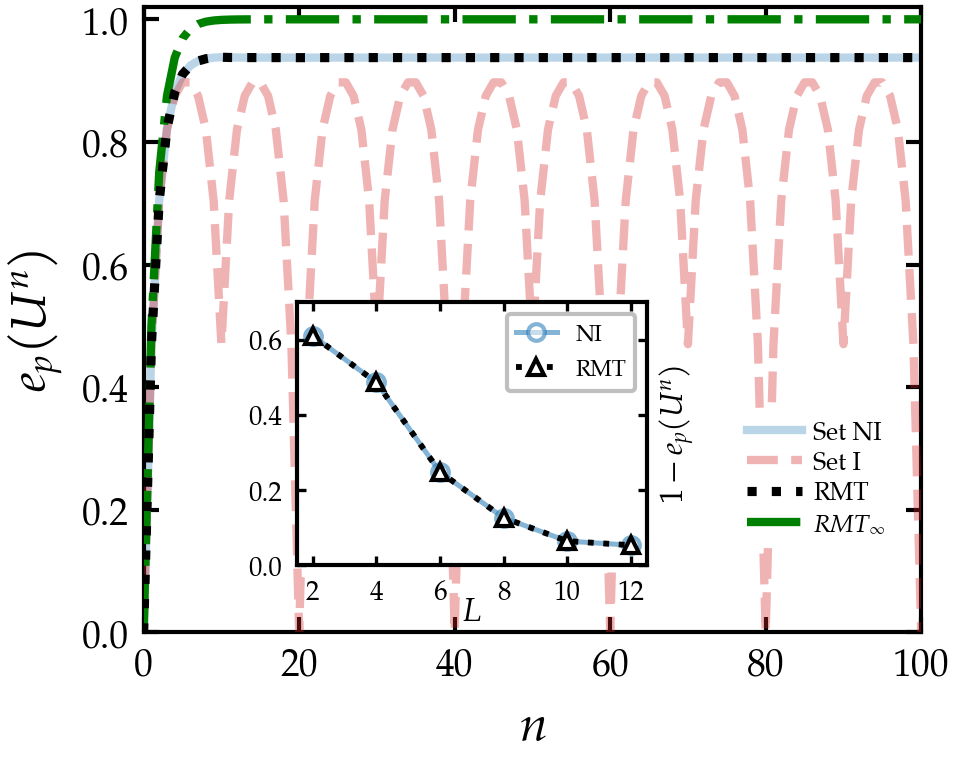}
   \caption{Entangling power of the Floquet time-evolution operator versus the number of time steps $n$, for $L=10$. 
   Results are presented for integrable set I, non-integrable Set NI,  finite and infinite-sized RMT model.
   Inset show $1-e_p(U^n)$ vs the system size $L$,for $n=100$.}
    \label{L10 spin chain normal}
\end{figure}

Next, we perform a similar analysis for ${e}_p(U^n)$ by plotting ${e}_p(U^n)$ as a function of $n$ in Fig.~\ref{L10 spin chain normal}. Once again, we find that the finite-size RMT results agree remarkably well with the Set NI non-integrable data. Moreover, the inset shows that, for $n=100$, the data move progressively closer to $1$ as $L$ increases, suggesting that, in the strict $n\to\infty$ and $L\to\infty$ limits, $e_p(U^n)$ approaches $1$. In contrast, the integrable data continue to exhibit oscillatory behavior, as expected from our previous analytical results. These observations confirm that the entangling-power dynamics in the non-integrable regime not only belongs to the ergodic class but also exhibits mixing, whereas mixing is absent in the integrable limit.

\begin{figure}
    \centering
    \includegraphics[width=1 \linewidth]{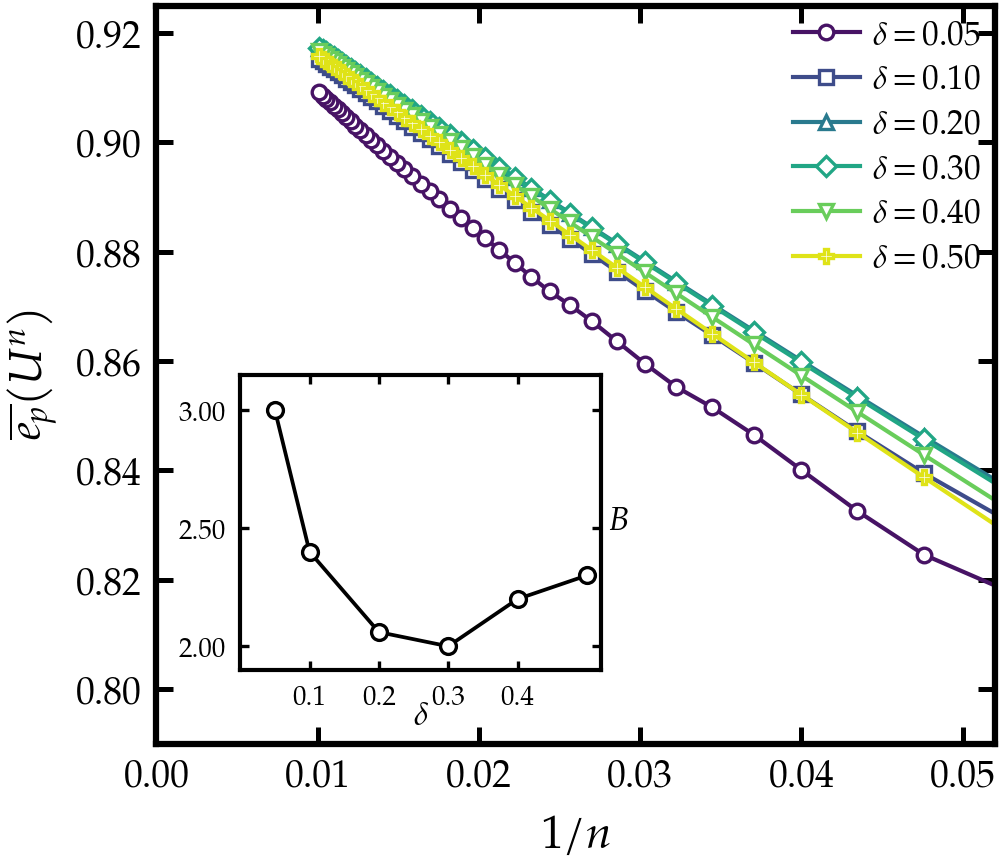}
    
    \caption{Average entangling power vs $1/n$, here $n$ ranges from $20$ to $100$ for different values of $\delta$. Data is fitted with a function $f-B/n$ (where $f$ and $B$ are fitting parameters), and inset shows  $B$ vs $\delta$ plot.}
    \label{L8}
\end{figure}

Further, we study the approach of $\overline{e}_p(U^n)$ to its fixed point at large $n$ for different values of integrability breaking parameter $\delta$ in Fig.~\ref{L8}. Motivated by the numerical data and the RMT prediction discussed before, we fit the data using 
$\overline{e}_p(U^n)=f-\frac{B}{n}$,
(using the fitting funtion proposed in Eq.~\ref{fitting ergodic}),
where $f$ denotes the asymptotic fixed-point value. Since, for a given $n$, $\overline{e}_p(U^n)$ is smaller in the integrable limit than in the non-integrable regime, we expect a larger value of $B$ in the integrable limit. Consequently, as we move away from the integrable point $\delta=0$, the value of $B$ is expected to decrease. In the thermodynamic limit, we further expect $B\to 2$ (see Eq.~\ref{eq: thermodynamic fitting para}), in agreement with the RMT prediction.
 We observe precisely this behavior in the insets of Fig.~\ref{L8}. At $\delta=0$, $B$ takes its maximum value and subsequently decreases with increasing integrability-breaking parameter $\delta$. It reaches a maximum around $\delta\approx 0.3$, where the fitted value of $B$ becomes very close to the RMT prediction, $B=2$. Moreover, the best-fit values of $f$ remain consistently close to $1$ for all values of $\delta$.
\begin{figure}
    \centering
    \includegraphics[width=1 \linewidth]{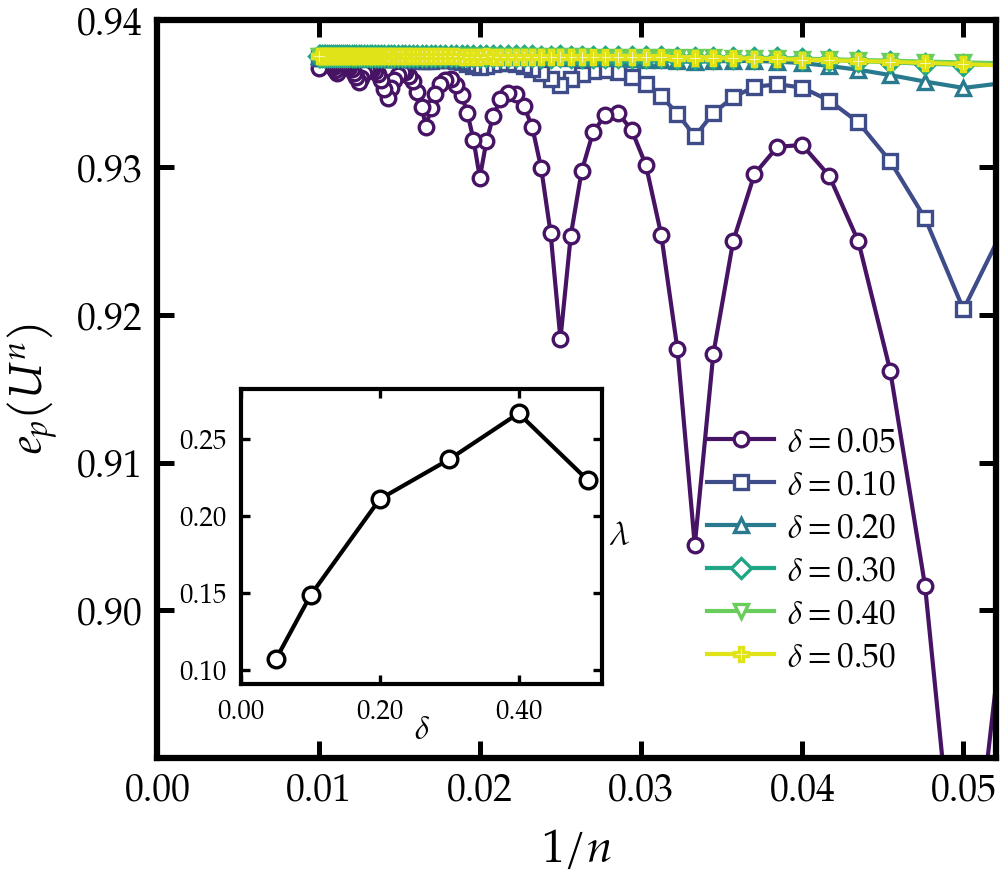}
    
    \caption{Entangling power vs $1/n$, here $n$ ranges from $20$ to $100$ for different values of $\delta$. Inset shows the variation of the fitting parameter $\lambda$ (which we refer as Lyapunov like like exponent) with $\delta$. }
    \label{ep_delta}
\end{figure}

Figure.~\ref{ep_delta} shows an analysis similar to that presented in Fig.~\ref{L8} for the entangling power ${e}_p(U^n)$. We find that, for small $n$, the data exhibits pronounced oscillations for small $\delta$, i.e., close to the integrable point. However, these fluctuations decay rapidly with increasing $n$, suggesting convergence toward the $n\to\infty$ limit of ${e}_p(U^n)$, indicating signature of mixing. In contrast, sufficiently away from the integrable point, corresponding to reasonably large $\delta$, the fluctuations are strongly suppressed even for small $n$. We perform an exponential fit of the data, as predicted by the RMT result (see Eqs.~\ref{ep mix rmt sat} and~\ref{fitting mixing}), to extract the decay rate $\lambda$. The inset shows the variation of $\lambda$ with the integrability-breaking parameter $\delta$. We find that $\lambda$ increases with increasing $\delta$, and expects that it will approaches to $\ln 2$ in the thermodynamic limit, which is the RMT prediction (see Eq.~\ref{eq: thermodynamic fitting para}). This behavior is consistent with the quantum chaotic dynamics. Moreover, as discussed in Sec.~\ref{sec:formalism}, $\lambda$ can be identified with the Lyapunov like like exponent characterizing the underlying chaotic dynamics.

\section{Conclusion}

In this work, we have characterized ergodicity and mixing in quantum dynamics from a quantum-information perspective, using entanglement generation as the principal diagnostic. In particular, ergodicity is examined through the long-time-averaged entangling power and its comparison with Lubkin's Haar-averaged linear entropy, whereas mixing is characterized by the long-time behavior of the entangling power. This approach provides an entanglement-based framework for distinguishing between the two dynamical properties and, in particular, allows systems that are ergodic but non-mixing to be identified.

We first applied this framework to general two-qubit gates, using their canonical representation in the Weyl chamber. The infinite-time-averaged entangling power is determined by the resonance conditions among the Weyl-chamber parameters and is therefore restricted to a discrete set of values. None of these values coincides with the corresponding two-qubit Lubkin value. Thus, within the present entanglement-based criterion, two-qubit gates do not exhibit ergodicity. Furthermore, the entangling power under repeated applications of a fixed two-qubit gate contains persistent oscillatory contributions and, in general, does not relax to the Haar-averaged value in the long-time limit. The two-qubit dynamics is therefore also non-mixing.

We further extended the analysis to the kicked Ising spin chain. For the particular integrable point considered here, the long-time-averaged entangling power approaches the Lubkin value as the chain length increases, indicating ergodic behavior in the thermodynamic limit according to our entanglement-based criterion. The underlying dynamics remain periodic, with a recurrence period proportional to the system size. The integrable kicked Ising chain, therefore, provides an explicit example of dynamics that becomes ergodic but remains non-mixing in the large-system limit. The non-integrable model is consistent with the RMT predictions, and it remains mixing and ergodic in the high-dimensional system. We also study the chaotic dynamics by calculating the Lyapunov-like exponent, which increases as the system becomes less integrable.

We can extend this work in the case of multiple bipartitions. Since we are going to a higher number of qubits, the natural number of bipartitions will increase, and the unitary evolution can create entanglement in the same state in different ways across different bipartitions. In the two-qubit gates, we found that there are four fixed points, and we can ask what possible classes of long-time-averaged entangling power can be attained by  multiqubit systems.

An interesting question is to investigate and relate the role of ergodic and mixing behavior of entangling power in quantum circuits comprising local gates for the generation of unitary $t$-designs \cite{Gross2007UnitaryDesigns,9tnv-2559} and as well as for other solvable models of many-body systems \cite{rampp2025solvablequantumcircuitsspacetime,rampp2026infinitelevelhierarchysolvablequantum,yu2024hierarchical}.

\begin{acknowledgments}
R.M. acknowledges the DST-Inspire fellowship from the
Department of Science and Technology, Government of India. We acknowledge Jisha C. for useful discussions on the random matrix theory (RMT) formalism. SA acknowledge funding support from the National Quantum Mission, an initiative of the Department of Science and Technology, Govt. of India. We also acknowledge the support provided by the Foundation for QC Innovation (FQCI), DST-NQM T-Hub at IISc Bengaluru, in facilitating this project.

\end{acknowledgments}

\newpage

\begin{thebibliography}{58}%
\makeatletter
\providecommand \@ifxundefined [1]{%
 \@ifx{#1\undefined}
}%
\providecommand \@ifnum [1]{%
 \ifnum #1\expandafter \@firstoftwo
 \else \expandafter \@secondoftwo
 \fi
}%
\providecommand \@ifx [1]{%
 \ifx #1\expandafter \@firstoftwo
 \else \expandafter \@secondoftwo
 \fi
}%
\providecommand \natexlab [1]{#1}%
\providecommand \enquote  [1]{``#1''}%
\providecommand \bibnamefont  [1]{#1}%
\providecommand \bibfnamefont [1]{#1}%
\providecommand \citenamefont [1]{#1}%
\providecommand \href@noop [0]{\@secondoftwo}%
\providecommand \href [0]{\begingroup \@sanitize@url \@href}%
\providecommand \@href[1]{\@@startlink{#1}\@@href}%
\providecommand \@@href[1]{\endgroup#1\@@endlink}%
\providecommand \@sanitize@url [0]{\catcode `\\12\catcode `\$12\catcode
  `\&12\catcode `\#12\catcode `\^12\catcode `\_12\catcode `\%12\relax}%
\providecommand \@@startlink[1]{}%
\providecommand \@@endlink[0]{}%
\providecommand \url  [0]{\begingroup\@sanitize@url \@url }%
\providecommand \@url [1]{\endgroup\@href {#1}{\urlprefix }}%
\providecommand \urlprefix  [0]{URL }%
\providecommand \Eprint [0]{\href }%
\providecommand \doibase [0]{http://dx.doi.org/}%
\providecommand \selectlanguage [0]{\@gobble}%
\providecommand \bibinfo  [0]{\@secondoftwo}%
\providecommand \bibfield  [0]{\@secondoftwo}%
\providecommand \translation [1]{[#1]}%
\providecommand \BibitemOpen [0]{}%
\providecommand \bibitemStop [0]{}%
\providecommand \bibitemNoStop [0]{.\EOS\space}%
\providecommand \EOS [0]{\spacefactor3000\relax}%
\providecommand \BibitemShut  [1]{\csname bibitem#1\endcsname}%
\let\auto@bib@innerbib\@empty
\bibitem [{\citenamefont {Arnold}\ and\ \citenamefont
  {Avez}(1968)}]{ArnoldAvez1968}%
  \BibitemOpen
  \bibfield  {author} {\bibinfo {author} {\bibfnamefont {V.~I.}\ \bibnamefont
  {Arnold}}\ and\ \bibinfo {author} {\bibfnamefont {A.}~\bibnamefont {Avez}},\
  }\href@noop {} {\emph {\bibinfo {title} {Ergodic Problems of Classical
  Mechanics}}}\ (\bibinfo  {publisher} {W. A. Benjamin},\ \bibinfo {address}
  {New York},\ \bibinfo {year} {1968})\BibitemShut {NoStop}%
\bibitem [{\citenamefont {Gisin}\ \emph {et~al.}(2002)\citenamefont {Gisin},
  \citenamefont {Ribordy}, \citenamefont {Tittel},\ and\ \citenamefont
  {Zbinden}}]{Gisin2007}%
  \BibitemOpen
  \bibfield  {author} {\bibinfo {author} {\bibfnamefont {Nicolas}\ \bibnamefont
  {Gisin}}, \bibinfo {author} {\bibfnamefont {Gr{\'e}goire}\ \bibnamefont
  {Ribordy}}, \bibinfo {author} {\bibfnamefont {Wolfgang}\ \bibnamefont
  {Tittel}}, \ and\ \bibinfo {author} {\bibfnamefont {Hugo}\ \bibnamefont
  {Zbinden}},\ }\bibfield  {title} {\enquote {\bibinfo {title} {Quantum
  cryptography},}\ }\href {\doibase 10.1103/RevModPhys.74.145} {\bibfield
  {journal} {\bibinfo  {journal} {Reviews of Modern Physics}\ }\textbf
  {\bibinfo {volume} {74}},\ \bibinfo {pages} {145--195} (\bibinfo {year}
  {2002})}\BibitemShut {NoStop}%
\bibitem [{\citenamefont {Nielsen}\ and\ \citenamefont
  {Chuang}(2010)}]{Nielsen2010}%
  \BibitemOpen
  \bibfield  {author} {\bibinfo {author} {\bibfnamefont {Michael~A}\
  \bibnamefont {Nielsen}}\ and\ \bibinfo {author} {\bibfnamefont {Isaac~L}\
  \bibnamefont {Chuang}},\ }\href@noop {} {\emph {\bibinfo {title} {Quantum
  computation and quantum information}}}\ (\bibinfo  {publisher} {Cambridge
  university press},\ \bibinfo {year} {2010})\BibitemShut {NoStop}%
\bibitem [{\citenamefont {Amico}\ \emph {et~al.}(2008)\citenamefont {Amico},
  \citenamefont {Fazio}, \citenamefont {Osterloh},\ and\ \citenamefont
  {Vedral}}]{Amico2008}%
  \BibitemOpen
  \bibfield  {author} {\bibinfo {author} {\bibfnamefont {L.}~\bibnamefont
  {Amico}}, \bibinfo {author} {\bibfnamefont {R.}~\bibnamefont {Fazio}},
  \bibinfo {author} {\bibfnamefont {A.}~\bibnamefont {Osterloh}}, \ and\
  \bibinfo {author} {\bibfnamefont {V.}~\bibnamefont {Vedral}},\ }\bibfield
  {title} {\enquote {\bibinfo {title} {Entanglement in many-body systems},}\
  }\href {\doibase 10.1103/RevModPhys.80.517} {\bibfield  {journal} {\bibinfo
  {journal} {Rev. Mod. Phys.}\ }\textbf {\bibinfo {volume} {80}},\ \bibinfo
  {pages} {517--576} (\bibinfo {year} {2008})}\BibitemShut {NoStop}%
\bibitem [{\citenamefont {Osterloh}\ \emph {et~al.}(2002)\citenamefont
  {Osterloh}, \citenamefont {Amico}, \citenamefont {Falci},\ and\ \citenamefont
  {Fazio}}]{Osterloh2002}%
  \BibitemOpen
  \bibfield  {author} {\bibinfo {author} {\bibfnamefont {A.}~\bibnamefont
  {Osterloh}}, \bibinfo {author} {\bibfnamefont {L.}~\bibnamefont {Amico}},
  \bibinfo {author} {\bibfnamefont {G.}~\bibnamefont {Falci}}, \ and\ \bibinfo
  {author} {\bibfnamefont {R.}~\bibnamefont {Fazio}},\ }\bibfield  {title}
  {\enquote {\bibinfo {title} {Scaling of entanglement close to a quantum phase
  transition},}\ }\href {\doibase 10.1038/416608a} {\bibfield  {journal}
  {\bibinfo  {journal} {Nature}\ }\textbf {\bibinfo {volume} {416}},\ \bibinfo
  {pages} {608--610} (\bibinfo {year} {2002})}\BibitemShut {NoStop}%
\bibitem [{\citenamefont {Osborne}\ and\ \citenamefont
  {Nielsen}(2002)}]{Osborne2002}%
  \BibitemOpen
  \bibfield  {author} {\bibinfo {author} {\bibfnamefont {T.~J.}\ \bibnamefont
  {Osborne}}\ and\ \bibinfo {author} {\bibfnamefont {M.~A.}\ \bibnamefont
  {Nielsen}},\ }\bibfield  {title} {\enquote {\bibinfo {title} {Entanglement in
  a simple quantum phase transition},}\ }\href {\doibase
  10.1103/PhysRevA.66.032110} {\bibfield  {journal} {\bibinfo  {journal} {Phys.
  Rev. A}\ }\textbf {\bibinfo {volume} {66}},\ \bibinfo {pages} {032110}
  (\bibinfo {year} {2002})}\BibitemShut {NoStop}%
\bibitem [{\citenamefont {Deutsch}(1991)}]{Deutsch1991}%
  \BibitemOpen
  \bibfield  {author} {\bibinfo {author} {\bibfnamefont {J.~M.}\ \bibnamefont
  {Deutsch}},\ }\bibfield  {title} {\enquote {\bibinfo {title} {Quantum
  statistical mechanics in a closed system},}\ }\href {\doibase
  10.1103/PhysRevA.43.2046} {\bibfield  {journal} {\bibinfo  {journal} {Phys.
  Rev. A}\ }\textbf {\bibinfo {volume} {43}},\ \bibinfo {pages} {2046--2049}
  (\bibinfo {year} {1991})}\BibitemShut {NoStop}%
\bibitem [{\citenamefont {Srednicki}(1994)}]{Srednicki1994}%
  \BibitemOpen
  \bibfield  {author} {\bibinfo {author} {\bibfnamefont {M.}~\bibnamefont
  {Srednicki}},\ }\bibfield  {title} {\enquote {\bibinfo {title} {Chaos and
  quantum thermalization},}\ }\href {\doibase 10.1103/PhysRevE.50.888}
  {\bibfield  {journal} {\bibinfo  {journal} {Phys. Rev. E}\ }\textbf {\bibinfo
  {volume} {50}},\ \bibinfo {pages} {888--901} (\bibinfo {year}
  {1994})}\BibitemShut {NoStop}%
\bibitem [{\citenamefont {D'Alessio}\ \emph {et~al.}(2016)\citenamefont
  {D'Alessio}, \citenamefont {Kafri}, \citenamefont {Polkovnikov},\ and\
  \citenamefont {Rigol}}]{DAlessio2016}%
  \BibitemOpen
  \bibfield  {author} {\bibinfo {author} {\bibfnamefont {L.}~\bibnamefont
  {D'Alessio}}, \bibinfo {author} {\bibfnamefont {Y.}~\bibnamefont {Kafri}},
  \bibinfo {author} {\bibfnamefont {A.}~\bibnamefont {Polkovnikov}}, \ and\
  \bibinfo {author} {\bibfnamefont {M.}~\bibnamefont {Rigol}},\ }\bibfield
  {title} {\enquote {\bibinfo {title} {From quantum chaos and eigenstate
  thermalization to statistical mechanics and thermodynamics},}\ }\href
  {\doibase 10.1080/00018732.2016.1198134} {\bibfield  {journal} {\bibinfo
  {journal} {Advances in Physics}\ }\textbf {\bibinfo {volume} {65}},\ \bibinfo
  {pages} {239--362} (\bibinfo {year} {2016})}\BibitemShut {NoStop}%
\bibitem [{\citenamefont {Hayden}\ and\ \citenamefont
  {Preskill}(2007)}]{Hayden2007}%
  \BibitemOpen
  \bibfield  {author} {\bibinfo {author} {\bibfnamefont {P.}~\bibnamefont
  {Hayden}}\ and\ \bibinfo {author} {\bibfnamefont {J.}~\bibnamefont
  {Preskill}},\ }\bibfield  {title} {\enquote {\bibinfo {title} {Black holes as
  mirrors: quantum information in random subsystems},}\ }\href {\doibase
  10.1088/1126-6708/2007/09/120} {\bibfield  {journal} {\bibinfo  {journal}
  {JHEP}\ }\textbf {\bibinfo {volume} {09}},\ \bibinfo {pages} {120} (\bibinfo
  {year} {2007})}\BibitemShut {NoStop}%
\bibitem [{\citenamefont {Sekino}\ and\ \citenamefont
  {Susskind}(2008)}]{Sekino2008}%
  \BibitemOpen
  \bibfield  {author} {\bibinfo {author} {\bibfnamefont {Y.}~\bibnamefont
  {Sekino}}\ and\ \bibinfo {author} {\bibfnamefont {L.}~\bibnamefont
  {Susskind}},\ }\bibfield  {title} {\enquote {\bibinfo {title} {Fast
  scramblers},}\ }\href {\doibase 10.1088/1126-6708/2008/10/065} {\bibfield
  {journal} {\bibinfo  {journal} {JHEP}\ }\textbf {\bibinfo {volume} {10}},\
  \bibinfo {pages} {065} (\bibinfo {year} {2008})}\BibitemShut {NoStop}%
\bibitem [{\citenamefont {Swingle}(2018)}]{Swingle2018}%
  \BibitemOpen
  \bibfield  {author} {\bibinfo {author} {\bibfnamefont {B.}~\bibnamefont
  {Swingle}},\ }\bibfield  {title} {\enquote {\bibinfo {title} {Unscrambling
  the physics of out-of-time-order correlators},}\ }\href {\doibase
  10.1038/s41567-018-0295-5} {\bibfield  {journal} {\bibinfo  {journal} {Nature
  Physics}\ }\textbf {\bibinfo {volume} {14}},\ \bibinfo {pages} {988--990}
  (\bibinfo {year} {2018})}\BibitemShut {NoStop}%
\bibitem [{\citenamefont {Calabrese}\ and\ \citenamefont
  {Cardy}(2005)}]{Calabrese2005}%
  \BibitemOpen
  \bibfield  {author} {\bibinfo {author} {\bibfnamefont {P.}~\bibnamefont
  {Calabrese}}\ and\ \bibinfo {author} {\bibfnamefont {J.}~\bibnamefont
  {Cardy}},\ }\bibfield  {title} {\enquote {\bibinfo {title} {Evolution of
  entanglement entropy in one-dimensional systems},}\ }\href {\doibase
  10.1088/1742-5468/2005/04/P04010} {\bibfield  {journal} {\bibinfo  {journal}
  {J. Stat. Mech.}\ ,\ \bibinfo {pages} {P04010}} (\bibinfo {year}
  {2005})}\BibitemShut {NoStop}%
\bibitem [{\citenamefont {Calabrese}\ and\ \citenamefont
  {Cardy}(2007)}]{Calabrese2007}%
  \BibitemOpen
  \bibfield  {author} {\bibinfo {author} {\bibfnamefont {P.}~\bibnamefont
  {Calabrese}}\ and\ \bibinfo {author} {\bibfnamefont {J.}~\bibnamefont
  {Cardy}},\ }\bibfield  {title} {\enquote {\bibinfo {title} {Quantum quenches
  in extended systems},}\ }\href {\doibase 10.1088/1742-5468/2007/06/P06008}
  {\bibfield  {journal} {\bibinfo  {journal} {J. Stat. Mech.}\ ,\ \bibinfo
  {pages} {P06008}} (\bibinfo {year} {2007})}\BibitemShut {NoStop}%
\bibitem [{\citenamefont {Eisert}\ \emph {et~al.}(2015)\citenamefont {Eisert},
  \citenamefont {Friesdorf},\ and\ \citenamefont {Gogolin}}]{Eisert2015}%
  \BibitemOpen
  \bibfield  {author} {\bibinfo {author} {\bibfnamefont {J.}~\bibnamefont
  {Eisert}}, \bibinfo {author} {\bibfnamefont {M.}~\bibnamefont {Friesdorf}}, \
  and\ \bibinfo {author} {\bibfnamefont {C.}~\bibnamefont {Gogolin}},\
  }\bibfield  {title} {\enquote {\bibinfo {title} {Quantum many-body systems
  out of equilibrium},}\ }\href {\doibase 10.1038/nphys3215} {\bibfield
  {journal} {\bibinfo  {journal} {Nature Physics}\ }\textbf {\bibinfo {volume}
  {11}},\ \bibinfo {pages} {124--130} (\bibinfo {year} {2015})}\BibitemShut
  {NoStop}%
\bibitem [{\citenamefont {Zanardi}\ \emph
  {et~al.}(2000{\natexlab{a}})\citenamefont {Zanardi}, \citenamefont {Zalka},\
  and\ \citenamefont {Faoro}}]{zanardi2000entangling}%
  \BibitemOpen
  \bibfield  {author} {\bibinfo {author} {\bibfnamefont {Paolo}\ \bibnamefont
  {Zanardi}}, \bibinfo {author} {\bibfnamefont {Christof}\ \bibnamefont
  {Zalka}}, \ and\ \bibinfo {author} {\bibfnamefont {Lara}\ \bibnamefont
  {Faoro}},\ }\bibfield  {title} {\enquote {\bibinfo {title} {Entangling power
  of quantum evolutions},}\ }\href@noop {} {\bibfield  {journal} {\bibinfo
  {journal} {Physical Review A}\ }\textbf {\bibinfo {volume} {62}},\ \bibinfo
  {pages} {030301} (\bibinfo {year} {2000}{\natexlab{a}})}\BibitemShut
  {NoStop}%
\bibitem [{\citenamefont
  {Zanardi}(2001{\natexlab{a}})}]{zanardi2001entanglement}%
  \BibitemOpen
  \bibfield  {author} {\bibinfo {author} {\bibfnamefont {Paolo}\ \bibnamefont
  {Zanardi}},\ }\bibfield  {title} {\enquote {\bibinfo {title} {Entanglement of
  quantum evolutions},}\ }\href@noop {} {\bibfield  {journal} {\bibinfo
  {journal} {Physical Review A}\ }\textbf {\bibinfo {volume} {63}},\ \bibinfo
  {pages} {040304} (\bibinfo {year} {2001}{\natexlab{a}})}\BibitemShut
  {NoStop}%
\bibitem [{\citenamefont {Pal}\ and\ \citenamefont
  {Lakshminarayan}(2018)}]{Pal2018}%
  \BibitemOpen
  \bibfield  {author} {\bibinfo {author} {\bibfnamefont {Rajarshi}\
  \bibnamefont {Pal}}\ and\ \bibinfo {author} {\bibfnamefont {Arul}\
  \bibnamefont {Lakshminarayan}},\ }\bibfield  {title} {\enquote {\bibinfo
  {title} {Entangling power of time-evolution operators in integrable and
  nonintegrable many-body systems},}\ }\href {\doibase
  10.1103/PhysRevB.98.174304} {\bibfield  {journal} {\bibinfo  {journal} {Phys.
  Rev. B}\ }\textbf {\bibinfo {volume} {98}},\ \bibinfo {pages} {174304}
  (\bibinfo {year} {2018})}\BibitemShut {NoStop}%
\bibitem [{\citenamefont {Cornfeld}\ \emph {et~al.}(1982)\citenamefont
  {Cornfeld}, \citenamefont {Fomin},\ and\ \citenamefont
  {Sinai}}]{Cornfeld1982}%
  \BibitemOpen
  \bibfield  {author} {\bibinfo {author} {\bibfnamefont {I.~P.}\ \bibnamefont
  {Cornfeld}}, \bibinfo {author} {\bibfnamefont {S.~V.}\ \bibnamefont {Fomin}},
  \ and\ \bibinfo {author} {\bibfnamefont {Ya.~G.}\ \bibnamefont {Sinai}},\
  }\href@noop {} {\emph {\bibinfo {title} {Ergodic Theory}}},\ \bibinfo
  {series} {A Series of Comprehensive Studies in Mathematics}, Vol.\ \bibinfo
  {volume} {245}\ (\bibinfo  {publisher} {Springer},\ \bibinfo {address} {New
  York},\ \bibinfo {year} {1982})\BibitemShut {NoStop}%
\bibitem [{\citenamefont {Zaslavsky}(1981)}]{Zaslavsky1981}%
  \BibitemOpen
  \bibfield  {author} {\bibinfo {author} {\bibfnamefont {G.~M.}\ \bibnamefont
  {Zaslavsky}},\ }\bibfield  {title} {\enquote {\bibinfo {title} {Stochasticity
  in quantum systems},}\ }\href@noop {} {\bibfield  {journal} {\bibinfo
  {journal} {Physics Reports}\ }\textbf {\bibinfo {volume} {80}},\ \bibinfo
  {pages} {157} (\bibinfo {year} {1981})}\BibitemShut {NoStop}%
\bibitem [{\citenamefont {Peres}(1984)}]{Peres1984}%
  \BibitemOpen
  \bibfield  {author} {\bibinfo {author} {\bibfnamefont {A.}~\bibnamefont
  {Peres}},\ }\bibfield  {title} {\enquote {\bibinfo {title} {Ergodicity and
  mixing in quantum theory. i},}\ }\href@noop {} {\bibfield  {journal}
  {\bibinfo  {journal} {Physical Review A}\ }\textbf {\bibinfo {volume} {30}},\
  \bibinfo {pages} {504} (\bibinfo {year} {1984})}\BibitemShut {NoStop}%
\bibitem [{\citenamefont {Bertini}\ \emph {et~al.}(2019)\citenamefont
  {Bertini}, \citenamefont {Kos},\ and\ \citenamefont {Prosen}}]{Bertini2019}%
  \BibitemOpen
  \bibfield  {author} {\bibinfo {author} {\bibfnamefont {Bruno}\ \bibnamefont
  {Bertini}}, \bibinfo {author} {\bibfnamefont {Pavel}\ \bibnamefont {Kos}}, \
  and\ \bibinfo {author} {\bibfnamefont {Toma{\v{z}}}\ \bibnamefont {Prosen}},\
  }\bibfield  {title} {\enquote {\bibinfo {title} {Exact correlation functions
  for dual-unitary lattice models in $1+1$ dimensions},}\ }\href {\doibase
  10.1103/PhysRevLett.123.210601} {\bibfield  {journal} {\bibinfo  {journal}
  {Phys. Rev. Lett.}\ }\textbf {\bibinfo {volume} {123}},\ \bibinfo {pages}
  {210601} (\bibinfo {year} {2019})}\BibitemShut {NoStop}%
\bibitem [{\citenamefont {Aravinda}\ \emph {et~al.}(2021)\citenamefont
  {Aravinda}, \citenamefont {Rather},\ and\ \citenamefont
  {Lakshminarayan}}]{Aravinda2021}%
  \BibitemOpen
  \bibfield  {author} {\bibinfo {author} {\bibfnamefont {S.}~\bibnamefont
  {Aravinda}}, \bibinfo {author} {\bibfnamefont {Suhail~Ahmad}\ \bibnamefont
  {Rather}}, \ and\ \bibinfo {author} {\bibfnamefont {Arul}\ \bibnamefont
  {Lakshminarayan}},\ }\bibfield  {title} {\enquote {\bibinfo {title} {From
  dual-unitary to quantum bernoulli circuits: Role of the entangling power in
  constructing a quantum ergodic hierarchy},}\ }\href {\doibase
  10.1103/PhysRevResearch.3.043034} {\bibfield  {journal} {\bibinfo  {journal}
  {Physical Review Research}\ }\textbf {\bibinfo {volume} {3}},\ \bibinfo
  {pages} {043034} (\bibinfo {year} {2021})}\BibitemShut {NoStop}%
\bibitem [{\citenamefont {Rather}\ \emph {et~al.}(2020)\citenamefont {Rather},
  \citenamefont {Aravinda},\ and\ \citenamefont {Lakshminarayan}}]{Rather2020}%
  \BibitemOpen
  \bibfield  {author} {\bibinfo {author} {\bibfnamefont {Suhail~Ahmad}\
  \bibnamefont {Rather}}, \bibinfo {author} {\bibfnamefont {S.}~\bibnamefont
  {Aravinda}}, \ and\ \bibinfo {author} {\bibfnamefont {Arul}\ \bibnamefont
  {Lakshminarayan}},\ }\bibfield  {title} {\enquote {\bibinfo {title} {Creating
  ensembles of dual unitary and maximally entangling quantum evolutions},}\
  }\href {\doibase 10.1103/PhysRevLett.125.070501} {\bibfield  {journal}
  {\bibinfo  {journal} {Physical Review Letters}\ }\textbf {\bibinfo {volume}
  {125}},\ \bibinfo {pages} {070501} (\bibinfo {year} {2020})}\BibitemShut
  {NoStop}%
\bibitem [{\citenamefont {Jonay}\ \emph {et~al.}(2021)\citenamefont {Jonay},
  \citenamefont {Khemani},\ and\ \citenamefont {Ippoliti}}]{Jonay2021}%
  \BibitemOpen
  \bibfield  {author} {\bibinfo {author} {\bibfnamefont {Cheryne}\ \bibnamefont
  {Jonay}}, \bibinfo {author} {\bibfnamefont {Vedika}\ \bibnamefont {Khemani}},
  \ and\ \bibinfo {author} {\bibfnamefont {Matteo}\ \bibnamefont {Ippoliti}},\
  }\bibfield  {title} {\enquote {\bibinfo {title} {Triunitary quantum
  circuits},}\ }\href {\doibase 10.1103/PhysRevResearch.3.043046} {\bibfield
  {journal} {\bibinfo  {journal} {Physical Review Research}\ }\textbf {\bibinfo
  {volume} {3}},\ \bibinfo {pages} {043046} (\bibinfo {year}
  {2021})}\BibitemShut {NoStop}%
\bibitem [{\citenamefont {Yu}\ \emph {et~al.}(2024)\citenamefont {Yu},
  \citenamefont {Wang},\ and\ \citenamefont {Kos}}]{yu2024hierarchical}%
  \BibitemOpen
  \bibfield  {author} {\bibinfo {author} {\bibfnamefont {Xie-Hang}\
  \bibnamefont {Yu}}, \bibinfo {author} {\bibfnamefont {Zhiyuan}\ \bibnamefont
  {Wang}}, \ and\ \bibinfo {author} {\bibfnamefont {Pavel}\ \bibnamefont
  {Kos}},\ }\bibfield  {title} {\enquote {\bibinfo {title} {Hierarchical
  generalization of dual unitarity},}\ }\href@noop {} {\bibfield  {journal}
  {\bibinfo  {journal} {Quantum}\ }\textbf {\bibinfo {volume} {8}},\ \bibinfo
  {pages} {1260} (\bibinfo {year} {2024})}\BibitemShut {NoStop}%
\bibitem [{\citenamefont {Jisha}\ and\ \citenamefont
  {Prakash}(2024)}]{Jisha_2024}%
  \BibitemOpen
  \bibfield  {author} {\bibinfo {author} {\bibfnamefont {C}~\bibnamefont
  {Jisha}}\ and\ \bibinfo {author} {\bibfnamefont {Ravi}\ \bibnamefont
  {Prakash}},\ }\bibfield  {title} {\enquote {\bibinfo {title} {Universality of
  spectral fluctuations in open quantum chaotic systems},}\ }\href {\doibase
  10.1209/0295-5075/ad2c35} {\bibfield  {journal} {\bibinfo  {journal}
  {Europhysics Letters}\ }\textbf {\bibinfo {volume} {146}},\ \bibinfo {pages}
  {11001} (\bibinfo {year} {2024})}\BibitemShut {NoStop}%
\bibitem [{\citenamefont {Mishra}\ and\ \citenamefont
  {Sahoo}(2025)}]{MISHRA2025131000}%
  \BibitemOpen
  \bibfield  {author} {\bibinfo {author} {\bibfnamefont {Smitarani}\
  \bibnamefont {Mishra}}\ and\ \bibinfo {author} {\bibfnamefont {Shaon}\
  \bibnamefont {Sahoo}},\ }\bibfield  {title} {\enquote {\bibinfo {title}
  {Quantum thermalization and average entropy of a subsystem},}\ }\href
  {\doibase https://doi.org/10.1016/j.physleta.2025.131000} {\bibfield
  {journal} {\bibinfo  {journal} {Physics Letters A}\ }\textbf {\bibinfo
  {volume} {561}},\ \bibinfo {pages} {131000} (\bibinfo {year}
  {2025})}\BibitemShut {NoStop}%
\bibitem [{\citenamefont {Aravinda}\ \emph {et~al.}(2024)\citenamefont
  {Aravinda}, \citenamefont {Banerjee},\ and\ \citenamefont
  {Modak}}]{PhysRevA.110.042607}%
  \BibitemOpen
  \bibfield  {author} {\bibinfo {author} {\bibfnamefont {S.}~\bibnamefont
  {Aravinda}}, \bibinfo {author} {\bibfnamefont {Shilpak}\ \bibnamefont
  {Banerjee}}, \ and\ \bibinfo {author} {\bibfnamefont {Ranjan}\ \bibnamefont
  {Modak}},\ }\bibfield  {title} {\enquote {\bibinfo {title} {Ergodic and
  mixing quantum channels: From two-qubit to many-body quantum systems},}\
  }\href {\doibase 10.1103/PhysRevA.110.042607} {\bibfield  {journal} {\bibinfo
   {journal} {Phys. Rev. A}\ }\textbf {\bibinfo {volume} {110}},\ \bibinfo
  {pages} {042607} (\bibinfo {year} {2024})}\BibitemShut {NoStop}%
\bibitem [{\citenamefont {Burgarth}\ \emph {et~al.}(2013)\citenamefont
  {Burgarth}, \citenamefont {Chiribella}, \citenamefont {Giovannetti},
  \citenamefont {Perinotti},\ and\ \citenamefont {Yuasa}}]{Burgarth2013}%
  \BibitemOpen
  \bibfield  {author} {\bibinfo {author} {\bibfnamefont {D.}~\bibnamefont
  {Burgarth}}, \bibinfo {author} {\bibfnamefont {G.}~\bibnamefont
  {Chiribella}}, \bibinfo {author} {\bibfnamefont {V.}~\bibnamefont
  {Giovannetti}}, \bibinfo {author} {\bibfnamefont {P.}~\bibnamefont
  {Perinotti}}, \ and\ \bibinfo {author} {\bibfnamefont {K.}~\bibnamefont
  {Yuasa}},\ }\bibfield  {title} {\enquote {\bibinfo {title} {Ergodic and
  mixing quantum channels in finite dimensions},}\ }\href@noop {} {\bibfield
  {journal} {\bibinfo  {journal} {New Journal of Physics}\ }\textbf {\bibinfo
  {volume} {15}},\ \bibinfo {pages} {073045} (\bibinfo {year}
  {2013})}\BibitemShut {NoStop}%
\bibitem [{\citenamefont {Movassagh}\ and\ \citenamefont
  {Schenker}(2022)}]{Movassagh2022}%
  \BibitemOpen
  \bibfield  {author} {\bibinfo {author} {\bibfnamefont {R.}~\bibnamefont
  {Movassagh}}\ and\ \bibinfo {author} {\bibfnamefont {J.}~\bibnamefont
  {Schenker}},\ }\bibfield  {title} {\enquote {\bibinfo {title} {An ergodic
  theorem for quantum processes with applications to matrix product states},}\
  }\href@noop {} {\bibfield  {journal} {\bibinfo  {journal} {Communications in
  Mathematical Physics}\ }\textbf {\bibinfo {volume} {395}},\ \bibinfo {pages}
  {1175} (\bibinfo {year} {2022})}\BibitemShut {NoStop}%
\bibitem [{\citenamefont {Movassagh}\ and\ \citenamefont
  {Schenker}(2021)}]{Movassagh2021}%
  \BibitemOpen
  \bibfield  {author} {\bibinfo {author} {\bibfnamefont {R.}~\bibnamefont
  {Movassagh}}\ and\ \bibinfo {author} {\bibfnamefont {J.}~\bibnamefont
  {Schenker}},\ }\bibfield  {title} {\enquote {\bibinfo {title} {Theory of
  ergodic quantum processes},}\ }\href@noop {} {\bibfield  {journal} {\bibinfo
  {journal} {Physical Review X}\ }\textbf {\bibinfo {volume} {11}},\ \bibinfo
  {pages} {041001} (\bibinfo {year} {2021})}\BibitemShut {NoStop}%
\bibitem [{\citenamefont {Zanardi}\ \emph
  {et~al.}(2000{\natexlab{b}})\citenamefont {Zanardi}, \citenamefont {Zalka},\
  and\ \citenamefont {Faoro}}]{Zanardi2000}%
  \BibitemOpen
  \bibfield  {author} {\bibinfo {author} {\bibfnamefont {Paolo}\ \bibnamefont
  {Zanardi}}, \bibinfo {author} {\bibfnamefont {Christof}\ \bibnamefont
  {Zalka}}, \ and\ \bibinfo {author} {\bibfnamefont {Lara}\ \bibnamefont
  {Faoro}},\ }\bibfield  {title} {\enquote {\bibinfo {title} {Entangling power
  of quantum evolutions},}\ }\href {\doibase 10.1103/PhysRevA.62.030301}
  {\bibfield  {journal} {\bibinfo  {journal} {Phys. Rev. A}\ }\textbf {\bibinfo
  {volume} {62}},\ \bibinfo {pages} {030301} (\bibinfo {year}
  {2000}{\natexlab{b}})}\BibitemShut {NoStop}%
\bibitem [{\citenamefont {Kraus}\ and\ \citenamefont
  {Cirac}(2001)}]{Kraus2001}%
  \BibitemOpen
  \bibfield  {author} {\bibinfo {author} {\bibfnamefont {Barbara}\ \bibnamefont
  {Kraus}}\ and\ \bibinfo {author} {\bibfnamefont {Juan~Ignacio}\ \bibnamefont
  {Cirac}},\ }\bibfield  {title} {\enquote {\bibinfo {title} {Optimal creation
  of entanglement using a two-qubit gate},}\ }\href {\doibase
  10.1103/PhysRevA.63.062309} {\bibfield  {journal} {\bibinfo  {journal}
  {Physical Review A}\ }\textbf {\bibinfo {volume} {63}},\ \bibinfo {pages}
  {062309} (\bibinfo {year} {2001})}\BibitemShut {NoStop}%
\bibitem [{\citenamefont {Makhlin}(2002)}]{Makhlin2002}%
  \BibitemOpen
  \bibfield  {author} {\bibinfo {author} {\bibfnamefont {Yuriy}\ \bibnamefont
  {Makhlin}},\ }\bibfield  {title} {\enquote {\bibinfo {title} {Nonlocal
  properties of two-qubit gates and mixed states, and the optimization of
  quantum computations},}\ }\href {\doibase 10.1023/A:1022144002391} {\bibfield
   {journal} {\bibinfo  {journal} {Quantum Information Processing}\ }\textbf
  {\bibinfo {volume} {1}},\ \bibinfo {pages} {243--252} (\bibinfo {year}
  {2002})}\BibitemShut {NoStop}%
\bibitem [{\citenamefont {Zhang}\ \emph {et~al.}(2003)\citenamefont {Zhang},
  \citenamefont {Vala}, \citenamefont {Sastry},\ and\ \citenamefont
  {Whaley}}]{Zhang2003}%
  \BibitemOpen
  \bibfield  {author} {\bibinfo {author} {\bibfnamefont {Jun}\ \bibnamefont
  {Zhang}}, \bibinfo {author} {\bibfnamefont {Jiri}\ \bibnamefont {Vala}},
  \bibinfo {author} {\bibfnamefont {Shankar}\ \bibnamefont {Sastry}}, \ and\
  \bibinfo {author} {\bibfnamefont {K.~Birgitta}\ \bibnamefont {Whaley}},\
  }\bibfield  {title} {\enquote {\bibinfo {title} {Geometric theory of nonlocal
  two-qubit operations},}\ }\href {\doibase 10.1103/PhysRevA.67.042313}
  {\bibfield  {journal} {\bibinfo  {journal} {Phys. Rev. A}\ }\textbf {\bibinfo
  {volume} {67}},\ \bibinfo {pages} {042313} (\bibinfo {year}
  {2003})}\BibitemShut {NoStop}%
\bibitem [{\citenamefont {Jonnadula}\ \emph {et~al.}(2017)\citenamefont
  {Jonnadula}, \citenamefont {Mandayam}, \citenamefont {{\.Z}yczkowski},\ and\
  \citenamefont {Lakshminarayan}}]{Jonnadula2017}%
  \BibitemOpen
  \bibfield  {author} {\bibinfo {author} {\bibfnamefont {Bhargavi}\
  \bibnamefont {Jonnadula}}, \bibinfo {author} {\bibfnamefont {Prabha}\
  \bibnamefont {Mandayam}}, \bibinfo {author} {\bibfnamefont {Karol}\
  \bibnamefont {{\.Z}yczkowski}}, \ and\ \bibinfo {author} {\bibfnamefont
  {Arul}\ \bibnamefont {Lakshminarayan}},\ }\bibfield  {title} {\enquote
  {\bibinfo {title} {Impact of local dynamics on entangling power},}\ }\href
  {\doibase 10.1103/PhysRevA.95.040302} {\bibfield  {journal} {\bibinfo
  {journal} {Physical Review A}\ }\textbf {\bibinfo {volume} {95}},\ \bibinfo
  {pages} {040302(R)} (\bibinfo {year} {2017})}\BibitemShut {NoStop}%
\bibitem [{\citenamefont {Manna}\ \emph {et~al.}(2024)\citenamefont {Manna},
  \citenamefont {Madhok},\ and\ \citenamefont {Lakshminarayan}}]{Manna2024}%
  \BibitemOpen
  \bibfield  {author} {\bibinfo {author} {\bibfnamefont {Sourav}\ \bibnamefont
  {Manna}}, \bibinfo {author} {\bibfnamefont {Vaibhav}\ \bibnamefont {Madhok}},
  \ and\ \bibinfo {author} {\bibfnamefont {Arul}\ \bibnamefont
  {Lakshminarayan}},\ }\bibfield  {title} {\enquote {\bibinfo {title}
  {Entangling power, gate typicality, and measurement-induced phase
  transitions},}\ }\href {\doibase 10.1103/PhysRevA.110.062422} {\bibfield
  {journal} {\bibinfo  {journal} {Physical Review A}\ }\textbf {\bibinfo
  {volume} {110}},\ \bibinfo {pages} {062422} (\bibinfo {year}
  {2024})}\BibitemShut {NoStop}%
\bibitem [{\citenamefont {Qiu}\ \emph {et~al.}(2025)\citenamefont {Qiu},
  \citenamefont {Song},\ and\ \citenamefont {Chen}}]{Qiu2025}%
  \BibitemOpen
  \bibfield  {author} {\bibinfo {author} {\bibfnamefont {Xinyu}\ \bibnamefont
  {Qiu}}, \bibinfo {author} {\bibfnamefont {Zhiwei}\ \bibnamefont {Song}}, \
  and\ \bibinfo {author} {\bibfnamefont {Lin}\ \bibnamefont {Chen}},\
  }\bibfield  {title} {\enquote {\bibinfo {title} {Multipartite entangling
  power by von neumann entropy},}\ }\href {\doibase
  10.1103/PhysRevA.111.022407} {\bibfield  {journal} {\bibinfo  {journal}
  {Physical Review A}\ }\textbf {\bibinfo {volume} {111}},\ \bibinfo {pages}
  {022407} (\bibinfo {year} {2025})}\BibitemShut {NoStop}%
\bibitem [{\citenamefont {Malik}\ \emph {et~al.}(2026)\citenamefont {Malik},
  \citenamefont {Shukla}, \citenamefont {Joshi}, \citenamefont {Aravinda},\
  and\ \citenamefont {Mishra}}]{Malik2026}%
  \BibitemOpen
  \bibfield  {author} {\bibinfo {author} {\bibfnamefont {Gaurav~Rudra}\
  \bibnamefont {Malik}}, \bibinfo {author} {\bibfnamefont {Rohit~Kumar}\
  \bibnamefont {Shukla}}, \bibinfo {author} {\bibfnamefont {Sudhanva}\
  \bibnamefont {Joshi}}, \bibinfo {author} {\bibfnamefont {S.}~\bibnamefont
  {Aravinda}}, \ and\ \bibinfo {author} {\bibfnamefont {Sunil~Kumar}\
  \bibnamefont {Mishra}},\ }\bibfield  {title} {\enquote {\bibinfo {title}
  {Entanglement structure for a finite system under dual-unitary dynamics},}\
  }\href@noop {} {\bibfield  {journal} {\bibinfo  {journal} {Physical Review
  B}\ }\textbf {\bibinfo {volume} {113}},\ \bibinfo {pages} {064307} (\bibinfo
  {year} {2026})}\BibitemShut {NoStop}%
\bibitem [{\citenamefont {Bansal}\ \emph {et~al.}(2025)\citenamefont {Bansal},
  \citenamefont {Mok}, \citenamefont {Bharti}, \citenamefont {Koh},\ and\
  \citenamefont {Haug}}]{PRXQuantum.6.020322}%
  \BibitemOpen
  \bibfield  {author} {\bibinfo {author} {\bibfnamefont {Nikhil}\ \bibnamefont
  {Bansal}}, \bibinfo {author} {\bibfnamefont {Wai-Keong}\ \bibnamefont {Mok}},
  \bibinfo {author} {\bibfnamefont {Kishor}\ \bibnamefont {Bharti}}, \bibinfo
  {author} {\bibfnamefont {Dax~Enshan}\ \bibnamefont {Koh}}, \ and\ \bibinfo
  {author} {\bibfnamefont {Tobias}\ \bibnamefont {Haug}},\ }\bibfield  {title}
  {\enquote {\bibinfo {title} {Pseudorandom density matrices},}\ }\href
  {\doibase 10.1103/PRXQuantum.6.020322} {\bibfield  {journal} {\bibinfo
  {journal} {PRX Quantum}\ }\textbf {\bibinfo {volume} {6}},\ \bibinfo {pages}
  {020322} (\bibinfo {year} {2025})}\BibitemShut {NoStop}%
\bibitem [{\citenamefont {Song}\ \emph {et~al.}(2025)\citenamefont {Song},
  \citenamefont {Yang}, \citenamefont {Liu}, \citenamefont {Zhang},
  \citenamefont {Xue}, \citenamefont {Mi}, \citenamefont {Zhang}, \citenamefont
  {Yan}, \citenamefont {Jin},\ and\ \citenamefont {Yu}}]{npr7-b7kq}%
  \BibitemOpen
  \bibfield  {author} {\bibinfo {author} {\bibfnamefont {Juan}\ \bibnamefont
  {Song}}, \bibinfo {author} {\bibfnamefont {Shuang}\ \bibnamefont {Yang}},
  \bibinfo {author} {\bibfnamefont {Pei}\ \bibnamefont {Liu}}, \bibinfo
  {author} {\bibfnamefont {Hui-Li}\ \bibnamefont {Zhang}}, \bibinfo {author}
  {\bibfnamefont {Guang-Ming}\ \bibnamefont {Xue}}, \bibinfo {author}
  {\bibfnamefont {Zhen-Yu}\ \bibnamefont {Mi}}, \bibinfo {author}
  {\bibfnamefont {Wen-Gang}\ \bibnamefont {Zhang}}, \bibinfo {author}
  {\bibfnamefont {Fei}\ \bibnamefont {Yan}}, \bibinfo {author} {\bibfnamefont
  {Yi-Rong}\ \bibnamefont {Jin}}, \ and\ \bibinfo {author} {\bibfnamefont
  {Hai-Feng}\ \bibnamefont {Yu}},\ }\bibfield  {title} {\enquote {\bibinfo
  {title} {Realization of high-fidelity perfect entanglers between remote
  superconducting quantum processors},}\ }\href {\doibase 10.1103/npr7-b7kq}
  {\bibfield  {journal} {\bibinfo  {journal} {Phys. Rev. Lett.}\ }\textbf
  {\bibinfo {volume} {135}},\ \bibinfo {pages} {050603} (\bibinfo {year}
  {2025})}\BibitemShut {NoStop}%
\bibitem [{\citenamefont {Vijaywargia}\ and\ \citenamefont
  {Lakshminarayan}(2025)}]{PhysRevE.111.014210}%
  \BibitemOpen
  \bibfield  {author} {\bibinfo {author} {\bibfnamefont {Bidhi}\ \bibnamefont
  {Vijaywargia}}\ and\ \bibinfo {author} {\bibfnamefont {Arul}\ \bibnamefont
  {Lakshminarayan}},\ }\bibfield  {title} {\enquote {\bibinfo {title}
  {Quantum-classical correspondence in quantum channels},}\ }\href {\doibase
  10.1103/PhysRevE.111.014210} {\bibfield  {journal} {\bibinfo  {journal}
  {Phys. Rev. E}\ }\textbf {\bibinfo {volume} {111}},\ \bibinfo {pages}
  {014210} (\bibinfo {year} {2025})}\BibitemShut {NoStop}%
\bibitem [{\citenamefont {Zhang}(2025)}]{d8kg-h1t7}%
  \BibitemOpen
  \bibfield  {author} {\bibinfo {author} {\bibfnamefont {Lin}\ \bibnamefont
  {Zhang}},\ }\bibfield  {title} {\enquote {\bibinfo {title} {Operator
  entanglement in su(2)-symmetric dissipative quantum many-body dynamics},}\
  }\href {\doibase 10.1103/d8kg-h1t7} {\bibfield  {journal} {\bibinfo
  {journal} {Phys. Rev. B}\ }\textbf {\bibinfo {volume} {112}},\ \bibinfo
  {pages} {144303} (\bibinfo {year} {2025})}\BibitemShut {NoStop}%
\bibitem [{\citenamefont {Mondal}\ \emph {et~al.}(2025)\citenamefont {Mondal},
  \citenamefont {Hazra},\ and\ \citenamefont {Sen(De)}}]{jl2b-bxfn}%
  \BibitemOpen
  \bibfield  {author} {\bibinfo {author} {\bibfnamefont {Sudipta}\ \bibnamefont
  {Mondal}}, \bibinfo {author} {\bibfnamefont {Samir~Kumar}\ \bibnamefont
  {Hazra}}, \ and\ \bibinfo {author} {\bibfnamefont {Aditi}\ \bibnamefont
  {Sen(De)}},\ }\bibfield  {title} {\enquote {\bibinfo {title} {Imperfect
  entangling power of quantum gates},}\ }\href {\doibase 10.1103/jl2b-bxfn}
  {\bibfield  {journal} {\bibinfo  {journal} {Phys. Rev. A}\ }\textbf {\bibinfo
  {volume} {112}},\ \bibinfo {pages} {012417} (\bibinfo {year}
  {2025})}\BibitemShut {NoStop}%
\bibitem [{\citenamefont {Mandarino}\ \emph {et~al.}(2018)\citenamefont
  {Mandarino}, \citenamefont {Linowski},\ and\ \citenamefont
  {{\.Z}yczkowski}}]{Mandarino2018}%
  \BibitemOpen
  \bibfield  {author} {\bibinfo {author} {\bibfnamefont {Antonio}\ \bibnamefont
  {Mandarino}}, \bibinfo {author} {\bibfnamefont {Tomasz}\ \bibnamefont
  {Linowski}}, \ and\ \bibinfo {author} {\bibfnamefont {Karol}\ \bibnamefont
  {{\.Z}yczkowski}},\ }\bibfield  {title} {\enquote {\bibinfo {title}
  {Bipartite unitary gates and billiard dynamics in the {W}eyl chamber},}\
  }\href {\doibase 10.1103/PhysRevA.98.012335} {\bibfield  {journal} {\bibinfo
  {journal} {Phys. Rev. A}\ }\textbf {\bibinfo {volume} {98}},\ \bibinfo
  {pages} {012335} (\bibinfo {year} {2018})}\BibitemShut {NoStop}%
\bibitem [{\citenamefont {Goli}\ \emph {et~al.}(2013)\citenamefont {Goli},
  \citenamefont {Sahoo}, \citenamefont {Ramasesha},\ and\ \citenamefont
  {Sen}}]{Sahoo2013}%
  \BibitemOpen
  \bibfield  {author} {\bibinfo {author} {\bibfnamefont {V.~M. L.
  Durga~Prasad}\ \bibnamefont {Goli}}, \bibinfo {author} {\bibfnamefont
  {Shaon}\ \bibnamefont {Sahoo}}, \bibinfo {author} {\bibfnamefont
  {S.}~\bibnamefont {Ramasesha}}, \ and\ \bibinfo {author} {\bibfnamefont
  {Diptiman}\ \bibnamefont {Sen}},\ }\bibfield  {title} {\enquote {\bibinfo
  {title} {Quantum phases of dimerized and frustrated heisenberg spin chains
  with s = 1/2, 1 and 3/2: An entanglement entropy and fidelity study},}\
  }\href {\doibase 10.1088/0953-8984/25/12/125603} {\bibfield  {journal}
  {\bibinfo  {journal} {Journal of Physics: Condensed Matter}\ }\textbf
  {\bibinfo {volume} {25}},\ \bibinfo {pages} {125603} (\bibinfo {year}
  {2013})}\BibitemShut {NoStop}%
\bibitem [{\citenamefont {Sahoo}\ \emph {et~al.}(2014)\citenamefont {Sahoo},
  \citenamefont {Goli}, \citenamefont {Sen},\ and\ \citenamefont
  {Ramasesha}}]{Sahoo2014}%
  \BibitemOpen
  \bibfield  {author} {\bibinfo {author} {\bibfnamefont {Shaon}\ \bibnamefont
  {Sahoo}}, \bibinfo {author} {\bibfnamefont {V.~M. L. Durga~Prasad}\
  \bibnamefont {Goli}}, \bibinfo {author} {\bibfnamefont {Diptiman}\
  \bibnamefont {Sen}}, \ and\ \bibinfo {author} {\bibfnamefont
  {S.}~\bibnamefont {Ramasesha}},\ }\bibfield  {title} {\enquote {\bibinfo
  {title} {Studies on a frustrated heisenberg spin chain with alternating
  ferromagnetic and antiferromagnetic exchanges},}\ }\href {\doibase
  10.1088/0953-8984/26/27/276002} {\bibfield  {journal} {\bibinfo  {journal}
  {Journal of Physics: Condensed Matter}\ }\textbf {\bibinfo {volume} {26}},\
  \bibinfo {pages} {276002} (\bibinfo {year} {2014})}\BibitemShut {NoStop}%
\bibitem [{\citenamefont {Sahoo}\ \emph {et~al.}(2020)\citenamefont {Sahoo},
  \citenamefont {Dey}, \citenamefont {Saha},\ and\ \citenamefont
  {Kumar}}]{Sahoo2020}%
  \BibitemOpen
  \bibfield  {author} {\bibinfo {author} {\bibfnamefont {Shaon}\ \bibnamefont
  {Sahoo}}, \bibinfo {author} {\bibfnamefont {D.}~\bibnamefont {Dey}}, \bibinfo
  {author} {\bibfnamefont {S.}~\bibnamefont {Saha}}, \ and\ \bibinfo {author}
  {\bibfnamefont {M.}~\bibnamefont {Kumar}},\ }\bibfield  {title} {\enquote
  {\bibinfo {title} {Haldane and dimer phases in a frustrated spin chain: an
  exact ground state and associated topological phase transition},}\ }\href
  {\doibase 10.1088/1361-648X/ab8b75} {\bibfield  {journal} {\bibinfo
  {journal} {Journal of Physics: Condensed Matter}\ }\textbf {\bibinfo {volume}
  {32}},\ \bibinfo {pages} {335601} (\bibinfo {year} {2020})}\BibitemShut
  {NoStop}%
\bibitem [{\citenamefont {Zanardi}(2001{\natexlab{b}})}]{Zanardi2001}%
  \BibitemOpen
  \bibfield  {author} {\bibinfo {author} {\bibfnamefont {Paolo}\ \bibnamefont
  {Zanardi}},\ }\bibfield  {title} {\enquote {\bibinfo {title} {Entanglement of
  quantum evolutions},}\ }\href {\doibase 10.1103/PhysRevA.63.040304}
  {\bibfield  {journal} {\bibinfo  {journal} {Phys. Rev. A}\ }\textbf {\bibinfo
  {volume} {63}},\ \bibinfo {pages} {040304} (\bibinfo {year}
  {2001}{\natexlab{b}})}\BibitemShut {NoStop}%
\bibitem [{\citenamefont {Wang}\ \emph {et~al.}(2003)\citenamefont {Wang},
  \citenamefont {Sanders},\ and\ \citenamefont {Berry}}]{Wang2003}%
  \BibitemOpen
  \bibfield  {author} {\bibinfo {author} {\bibfnamefont {Xiaoguang}\
  \bibnamefont {Wang}}, \bibinfo {author} {\bibfnamefont {Barry~C.}\
  \bibnamefont {Sanders}}, \ and\ \bibinfo {author} {\bibfnamefont
  {Dominic~W.}\ \bibnamefont {Berry}},\ }\bibfield  {title} {\enquote {\bibinfo
  {title} {Entangling power and operator entanglement in qudit systems},}\
  }\href {\doibase 10.1103/PhysRevA.67.042323} {\bibfield  {journal} {\bibinfo
  {journal} {Phys. Rev. A}\ }\textbf {\bibinfo {volume} {67}},\ \bibinfo
  {pages} {042323} (\bibinfo {year} {2003})}\BibitemShut {NoStop}%
\bibitem [{\citenamefont {Lubkin}(1978)}]{Lubkin1978}%
  \BibitemOpen
  \bibfield  {author} {\bibinfo {author} {\bibfnamefont {Elihu}\ \bibnamefont
  {Lubkin}},\ }\bibfield  {title} {\enquote {\bibinfo {title} {Entropy of an
  n-system from its correlation with a k-reservoir},}\ }\href {\doibase
  10.1063/1.523763} {\bibfield  {journal} {\bibinfo  {journal} {Journal of
  Mathematical Physics}\ }\textbf {\bibinfo {volume} {19}},\ \bibinfo {pages}
  {1028--1031} (\bibinfo {year} {1978})}\BibitemShut {NoStop}%
\bibitem [{\citenamefont {Balakrishnan}\ and\ \citenamefont
  {Sankaranarayanan}(2010)}]{Balakrishnan2010}%
  \BibitemOpen
  \bibfield  {author} {\bibinfo {author} {\bibfnamefont {S.}~\bibnamefont
  {Balakrishnan}}\ and\ \bibinfo {author} {\bibfnamefont {R.}~\bibnamefont
  {Sankaranarayanan}},\ }\bibfield  {title} {\enquote {\bibinfo {title}
  {Entangling power and local invariants of two-qubit gates},}\ }\href
  {\doibase 10.1103/PhysRevA.82.034301} {\bibfield  {journal} {\bibinfo
  {journal} {Physical Review A}\ }\textbf {\bibinfo {volume} {82}},\ \bibinfo
  {pages} {034301} (\bibinfo {year} {2010})}\BibitemShut {NoStop}%
\bibitem [{\citenamefont {Lakshminarayan}\ and\ \citenamefont
  {Subrahmanyam}(2005)}]{arul.2005}%
  \BibitemOpen
  \bibfield  {author} {\bibinfo {author} {\bibfnamefont {Arul}\ \bibnamefont
  {Lakshminarayan}}\ and\ \bibinfo {author} {\bibfnamefont {V.}~\bibnamefont
  {Subrahmanyam}},\ }\bibfield  {title} {\enquote {\bibinfo {title}
  {Multipartite entanglement in a one-dimensional time-dependent ising
  model},}\ }\href {\doibase 10.1103/PhysRevA.71.062334} {\bibfield  {journal}
  {\bibinfo  {journal} {Phys. Rev. A}\ }\textbf {\bibinfo {volume} {71}},\
  \bibinfo {pages} {062334} (\bibinfo {year} {2005})}\BibitemShut {NoStop}%
\bibitem [{\citenamefont {Gross}\ \emph {et~al.}(2007)\citenamefont {Gross},
  \citenamefont {Audenaert},\ and\ \citenamefont
  {Eisert}}]{Gross2007UnitaryDesigns}%
  \BibitemOpen
  \bibfield  {author} {\bibinfo {author} {\bibfnamefont {D.}~\bibnamefont
  {Gross}}, \bibinfo {author} {\bibfnamefont {K.}~\bibnamefont {Audenaert}}, \
  and\ \bibinfo {author} {\bibfnamefont {J.}~\bibnamefont {Eisert}},\
  }\bibfield  {title} {\enquote {\bibinfo {title} {Evenly distributed
  unitaries: On the structure of unitary designs},}\ }\href {\doibase
  10.1063/1.2716992} {\bibfield  {journal} {\bibinfo  {journal} {Journal of
  Mathematical Physics}\ }\textbf {\bibinfo {volume} {48}},\ \bibinfo {pages}
  {052104} (\bibinfo {year} {2007})}\BibitemShut {NoStop}%
\bibitem [{\citenamefont {Suzuki}\ \emph {et~al.}(2026)\citenamefont {Suzuki},
  \citenamefont {Katsura}, \citenamefont {Mitsuhashi}, \citenamefont {Soejima},
  \citenamefont {Eisert},\ and\ \citenamefont {Yoshioka}}]{9tnv-2559}%
  \BibitemOpen
  \bibfield  {author} {\bibinfo {author} {\bibfnamefont {Ryotaro}\ \bibnamefont
  {Suzuki}}, \bibinfo {author} {\bibfnamefont {Hosho}\ \bibnamefont {Katsura}},
  \bibinfo {author} {\bibfnamefont {Yosuke}\ \bibnamefont {Mitsuhashi}},
  \bibinfo {author} {\bibfnamefont {Tomohiro}\ \bibnamefont {Soejima}},
  \bibinfo {author} {\bibfnamefont {Jens}\ \bibnamefont {Eisert}}, \ and\
  \bibinfo {author} {\bibfnamefont {Nobuyuki}\ \bibnamefont {Yoshioka}},\
  }\bibfield  {title} {\enquote {\bibinfo {title} {More global randomness from
  less-random local gates},}\ }\href {\doibase 10.1103/9tnv-2559} {\bibfield
  {journal} {\bibinfo  {journal} {Phys. Rev. A}\ }\textbf {\bibinfo {volume}
  {114}},\ \bibinfo {pages} {022424} (\bibinfo {year} {2026})}\BibitemShut
  {NoStop}%
\bibitem [{\citenamefont {Rampp}\ \emph {et~al.}(2025)\citenamefont {Rampp},
  \citenamefont {Rather},\ and\ \citenamefont
  {Claeys}}]{rampp2025solvablequantumcircuitsspacetime}%
  \BibitemOpen
  \bibfield  {author} {\bibinfo {author} {\bibfnamefont {Michael~A.}\
  \bibnamefont {Rampp}}, \bibinfo {author} {\bibfnamefont {Suhail~A.}\
  \bibnamefont {Rather}}, \ and\ \bibinfo {author} {\bibfnamefont {Pieter~W.}\
  \bibnamefont {Claeys}},\ }\href {https://arxiv.org/abs/2512.15871} {\enquote
  {\bibinfo {title} {Solvable quantum circuits from spacetime lattices},}\ }
  (\bibinfo {year} {2025}),\ \Eprint {http://arxiv.org/abs/2512.15871}
  {arXiv:2512.15871 [quant-ph]} \BibitemShut {NoStop}%
\bibitem [{\citenamefont {Rampp}\ \emph {et~al.}(2026)\citenamefont {Rampp},
  \citenamefont {Rather},\ and\ \citenamefont
  {Claeys}}]{rampp2026infinitelevelhierarchysolvablequantum}%
  \BibitemOpen
  \bibfield  {author} {\bibinfo {author} {\bibfnamefont {Michael~A.}\
  \bibnamefont {Rampp}}, \bibinfo {author} {\bibfnamefont {Suhail~A.}\
  \bibnamefont {Rather}}, \ and\ \bibinfo {author} {\bibfnamefont {Pieter~W.}\
  \bibnamefont {Claeys}},\ }\href {https://arxiv.org/abs/2606.23803} {\enquote
  {\bibinfo {title} {Infinite-level hierarchy of solvable quantum circuits},}\
  } (\bibinfo {year} {2026}),\ \Eprint {http://arxiv.org/abs/2606.23803}
  {arXiv:2606.23803 [quant-ph]} \BibitemShut {NoStop}%
\end{thebibliography}

%
\end{document}